\documentclass[11pt,fullpage]{article}


\usepackage{amsfonts}
\usepackage{amsmath}
\usepackage{amssymb}
\usepackage[dvips]{graphicx}
\usepackage{epsf}
\usepackage{epsfig}
\usepackage{subfigure}
\usepackage{wrapfig}
\usepackage{geometry}
\usepackage{algorithm}
\usepackage{algorithmic}
\usepackage{color}
\usepackage{tweaklist}

\geometry{left=1in,top=1in,right=1in,bottom=1in}


 
 
 
 
 

\newtheorem{theorem}{Theorem}[section]
\newtheorem{lemma}[theorem]{Lemma}
\newtheorem{proposition}[theorem]{Proposition}
\newtheorem{corollary}[theorem]{Corollary}

\newtheorem{fact}[theorem]{Fact}

\newenvironment{proof}[1][Proof:]{\begin{trivlist}
\item[\hskip \labelsep {\bfseries #1}]}{\qed \end{trivlist}}

\def\squarebox#1{\hbox to #1{\hfill\vbox to #1{\vfill}}}
\newcommand{\qed}{\hspace*{\fill}
\vbox{\hrule\hbox{\vrule\squarebox{.667em}\vrule}\hrule}\smallskip}

\newcommand{\N}{\ensuremath{\mathbb{N}}} 
\newcommand{\Z}{\ensuremath{\mathbb{Z}}} 
\newcommand{\R}{\ensuremath{\mathbb{R}}} 

\newcommand{\bigo}{{\mathcal{O}}}
\newcommand{\cd}{{\mathcal{C}}}
\newcommand{\ct}{{v}}
\newcommand{\profit}{{\pi}}
\newcommand{\expect}{{\mathbb{E}}}
\newcommand{\prob}{{\mbox{Prob}}}

\DeclareMathOperator{\argmax}{argmax}

\def \ee {{\rm e}} 


\begin{document}

\title{Pricing Randomized Allocations}
\author{
Patrick Briest \footnotemark[1]  \and
Shuchi Chawla \footnotemark[2]  \and
Robert Kleinberg \footnotemark[1] \and
S. Matthew Weinberg \footnotemark[1] \and
}
\footnotetext[1]{Department of Computer Science, Cornell University, Ithaca NY 14853.  E-mail: \texttt{\{pbriest,rdk\}@cs.cornell.edu, smw79@cornell.edu}. The first author is supported by a scholarship of the German Academic Exchange Service (DAAD).}
\footnotetext[2]{Computer Science Department, University of Wisconsin - Madison, 1210 W. Dayton Street, Madison, WI 53706. E-mail: \texttt{shuchi@cs.wisc.edu}.}

\date{}

\maketitle

\begin{abstract}
Randomized mechanisms, which map a set of bids to a 
probability distribution over outcomes rather than a
single outcome, are an important but ill-understood
area of computational mechanism design.  
We investigate the role of randomized outcomes 
(henceforth, ``lotteries'') in
the context of a fundamental and archetypical multi-parameter
mechanism design problem: selling heterogeneous items to 
unit-demand bidders.  
To what extent can a seller 
improve its revenue by pricing lotteries rather than 
items, and does this modification of the problem affect
its computational tractability?  Our results show that
the answers to these questions hinge on whether 
consumers can purchase only one lottery (the
buy-one model) or purchase any set of lotteries 
and receive an independent sample from each (the buy-many model).
In the buy-one model, there is a polynomial-time
algorithm to compute the revenue-maximizing envy-free 
prices (thus overcoming the inapproximability of the
corresponding item pricing problem) and the revenue
of the optimal lottery system can exceed the revenue
of the optimal item pricing by an unbounded factor
as long as the number of item types exceeds 4.
In the buy-many model with $n$ item types, 
the profit achieved by lottery pricing can exceed
item pricing by a factor of $\Theta(\log n)$ but
not more, and optimal lottery pricing cannot be approximated 
within a factor of $\bigo (n^{\varepsilon})$ for some $\varepsilon >0$, unless NP $\subseteq \bigcap _{\delta >0}$ BPTIME$(2^{\bigo (n^{\delta})})$.  Our lower bounds rely on 
a mixture of geometric and algebraic techniques,
whereas the upper bounds use a novel rounding scheme to 
transform a mechanism with randomized outcomes into one 
with deterministic outcomes while losing only a bounded
amount of revenue.
\end{abstract}

\newpage
\setcounter{page}{1}

\section{Introduction}
\label{sec:intro}

It is well known that randomness is a very useful resource 
for designing efficient algorithms, and the same is true
for designing truthful mechanisms to allocate resources among self-interested participants 
with private inputs.
Randomness plays several important roles in mechanism design.
First, while approximately optimal 
deterministic truthful mechanisms are required to solve
intractable problems in some cases,
computationally efficient randomized truthful mechanisms 
can often provide much better approximations. 
Second, in certain prior-free or online settings,
randomness can offset the cost imposed by a lack of
knowledge of agents' types (private inputs), allowing
mechanisms to compete favorably against an 
optimal omniscient mechanism when deterministic 
mechanisms cannot.  In the absence of these two effects, 
i.e.~when computational efficiency is not a consideration 
and the designer has distributional information about the 
agents' types (the Bayesian setting), one might suspect 
that randomness provides no benefit.  Riley
and Zeckhauser~\cite{RZ} proved that this is indeed true for revenue
maximization in single-parameter settings. Surprisingly this
observation fails to extend to multi-parameter settings:
Thanassoulis~\cite{Than} and Manelli and Vincent~\cite{MV}
independently noted that in multi-parameter settings a mechanism can
use randomized allocations to extract greater revenue from the
agents than any deterministic allocation. 
The benefit arises by allowing the mechanism to 
price-discriminate among agents to a greater extent by expanding its
allocation space to include probabilistic mixtures of outcomes. This
use of randomization has not yet been studied from a computer science
perspective.  Here we study randomized allocations (henceforth,
``lotteries'') in the context of a fundamental and archetypical
multi-parameter mechanism design problem: selling heterogeneous items
to unit-demand bidders.


There are a few reasons why we focus on unit-demand bidders and
heterogeneous items.  First, it is the setting explored by
Thanassoulis in his pioneering work on lotteries as an optimal selling
mechanism.  That work raised fundamental questions about the extent to
which lotteries can improve a seller's revenue; here we answer these
questions.  Second, unit-demand multi-product pricing is one of the
best known inapproximable problems in the theory of pricing
algorithms~\cite{Briest,envyfree}, and is thus
a test case for the general question, ``Does pricing
lotteries instead of items allow us to circumvent the computational
hardness of pricing problems?''  We will see that the answer is
affirmative.

In more detail, the problems we consider in this paper involve selling
$n$ types of items.  A consumer is represented by a vector of $n$
non-negative real numbers denoting her valuation for receiving one
copy of each item type.  A lottery is specified by a price and a
probability distribution over the set of item types.  Consumers are
risk neutral and quasi-linear: their utility for a lottery is equal to
their expected valuation for a random sample from the item
distribution, minus the price of the lottery.  They have unit demand:
their valuation for a set of items is equal to their maximum valuation
for any element of that set.  Given a distribution over
consumers\footnote{With the exception of Section~\ref{polytimeAlgo},
  we allow distributions with infinite support.} and a set of
lotteries (a \emph{lottery pricing system}) the seller achieves a
revenue equal to the expected price paid by a consumer who is randomly
sampled from the given distribution and chooses her
utility-maximizing\footnote{If the set of lotteries is infinite, we
  require it to be a closed subset of the space of all lotteries,
  topologized as a subset of $\R^{n+1}$.  This requirement is
  necessary in order to ensure that there exists a utility-maximizing
  lottery.} lottery.

The following example from~\cite{Than}
illustrates the power of lottery pricing systems.
Suppose there are two item types and a consumer's
valuations for the two items are independent and
uniformly distributed in an interval $[a,b]$.  
The optimal item pricing always sets the same
price $p^*$ for both items; the revenue-maximizing 
value of $p^*$ depends on $a$ and $b$
and can be found by solving a quadratic equation.  
Figure~\ref{fig:ex1:a} illustrates the resulting
partition of the consumer type space into
three regions: those consumers who choose
to buy item $1$, those who choose item $2$,
and those who choose to buy nothing.
Now suppose that in addition to pricing
both items at $p^*$, we also offer a lottery
at price $p^* - \delta$ which yields one of the
two items chosen uniformly at random.  Some of the 
consumers who originally bought items $1$ or $2$,
but were nearly indifferent between them, will
now buy the lottery instead.  This set of 
consumers is represented by the light shaded
area in Figure~\ref{fig:ex1:b}, and each
such consumer pays $\delta$ less than they
would have paid in the original item pricing.
However, this loss of revenue is counterbalanced
by another set of consumers,
represented by the dark shaded area in 
Figure~\ref{fig:ex1:b}, who previously bought
nothing but now pay $p^*-\delta$ for the lottery.
If the values of $a,b,\delta$ are chosen
appropriately, the second effect more than offsets
the first and results in a net increase in revenue.
In effect, the lottery allows for price discrimination
between nearly-indifferent consumers and those who
have a strong preference for one item type.  This
additional price discrimination power allows the
seller to improve revenue.  
In the example discussed here, the seller's net gain is quite
moderate: less than $10\%$ in all of the cases 
analyzed in~\cite{Than}.
Thanassoulis leaves it as an
open question to determine the greatest possible
factor by which the seller can increase revenue
using lottery pricing, noting that it appears 
difficult to resolve this question because of 
the complexity of solving the underlying optimization
problem.

\begin{wrapfigure}{r}{.3\textwidth}
\vspace{-7mm}
  \centering
  \subfigure[Item pricing] 
  {
      \label{fig:ex1:a}
      \epsfig{file=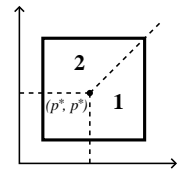,width=25mm}
  }
  \hspace{1cm}
  \subfigure[Lottery pricing] 
  {
      \label{fig:ex1:b}
      \epsfig{file=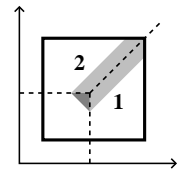,width=25mm}
  }
  \caption{Item pricing versus lottery pricing.}
  \label{fig:sub} 
\vspace{-5mm}
\end{wrapfigure}

\paragraph{Our contributions.}
The preceding discussion raises two obvious questions.
\begin{enumerate}
\item  \label{mq:1}
By what factor can the revenue obtained 
by optimal lottery pricing exceed that obtained 
by optimal item pricing?  Is there any finite
upper bound on this ratio?  If so, how does
the upper bound depend on $n$?
\item  \label{mq:2} 
What is the computational complexity of
evaluating or approximating the optimal revenue 
obtained by lottery pricing? 
\end{enumerate}
We solve both of these questions.  In fact, we
distinguish two versions of the lottery pricing
problem:  the \emph{buy-one} model, in which consumers are only 
allowed to buy one lottery, and the \emph{buy-many}
model, in which they can buy any number of 
lotteries and receive an independent sample 
from each.  It turns out that the answers
to questions~\ref{mq:1} and~\ref{mq:2}
hinge on whether we are in the
buy-one model or the buy-many model.  In the
buy-one model, there is a polynomial-time algorithm
to compute the optimal lottery pricing system,
in contrast to the corresponding item pricing
problem (namely, envy-free unit-demand 
pricing~\cite{envyfree}) which admits no constant-factor 
approximation~\cite{Briest}.  However, there 
is no finite upper bound on the ratio between
optimal lottery revenue and optimal item-pricing
revenue as long as $n>4$.  In the buy-many model,
the ratio is bounded by $\bigo(\log n)$ and this bound
is tight up to a constant factor.  One consequence
is that optimal lottery pricing 
in the buy-many model inherits the inapproximability
of the corresponding item pricing problem.

Our lower bounds on the gap between optimal revenue
with and without lotteries are proven using a variety
of geometric techniques.  In the buy-one model, the
construction relies on the geometry of unit vectors
in Euclidean space, whereas in the buy-many model we
rely on the geometry of degree-$2$ curves in the affine
plane over a finite field.  To obtain the matching upper 
bound we introduce a novel
rounding scheme to transform a lottery pricing system 
into one that only prices pure outcomes while losing
a bounded amount of revenue.  This
type of rounding is quite challenging because, unlike
in the case of rounding fractional LP solutions to integer 
solutions, the objective function is very sensitive to
the choices made during the rounding process: a slight
change in a lottery's probability distribution can cause
a consumer to choose a different lottery and pay a 
different price, resulting in a huge change in revenue.
We overcome this difficulty by coupling the rounding
decisions made for the different items via a single
random variable $t$, then using the properties of 
optimal lottery systems in the buy-many model to bound 
the range of values over which we must sample $t$.

\paragraph{Related work.} As mentioned earlier, randomization is
used extensively in prior-free and online mechanism design (e.g.,
\cite{AGT-HK}, and references therein), as well as for problems where
optimal deterministic truthful mechanisms are hard to compute (e.g.,
\cite{DNS06, APTT03, D3R08}).

In the economics literature, optimal multi-parameter mechanism design and
pricing problems have been studied extensively with a focus towards
deterministic mechanisms (see, e.g., \cite{Armstrong,
  RochetChone}). It is well known~\cite{Mye, RZ} that in
single-parameter Bayesian settings optimal mechanisms are
deterministic, however no general-purpose characterization of optimal
mechanisms is known in the multi-parameter case~\cite{MV-2,
  McAfeeMcMillan}. Thanassoulis~\cite{Than} and Manelli and
Vincent~\cite{MV} independently presented examples showing that in
multiple dimensions randomization can indeed increase the seller's
revenue even when the agents' values are drawn from a product
distribution. The extent of this improvement was unknown prior to our
work.

A number of recent works in CS have explored profit maximization via
``envy-free'' pricing mechanisms~\cite{AFM+04, BB06, envyfree,
  CHK07}. For the setting that we consider, unit-demand bidders with
heterogenous items, Guruswami et al.~\cite{envyfree} presented an
approximation to the optimal envy-free pricing that is logarithmic in
the number of agents. Briest~\cite{Briest} showed that this is
essentially the best possible, under a certain hardness of
approximation assumption for the balanced bipartite independent set
problem, and further that under the same assumption the envy-free
pricing problem is inapproximable to within a factor of $\bigo (n^{\varepsilon})$
for some $\varepsilon >0$. On the positive side, Chawla et al.~\cite{CHK07}
showed that when the values of agents are drawn from a product
distribution, the optimal envy-free pricing can be approximated to
within a factor of $3$.

\section{Preliminaries}
\label{model}

We consider the unit-demand envy-free pricing
problem with $n$ distinct items and some 
distribution $\cd$ on possible consumer
types.  A consumer is given by her
valuation vector $\ct=(v_1,\ldots,v_n) \in
(\R_0^+)^n$ and is interested in purchasing
exactly one of the items.  In the classical
\emph{item pricing problem}, given prices
$p_1,\ldots,p_n$, a consumer chooses to 
purchase the item $i$ maximizing her utility
$v_i-p_i$ or nothing, should this quantity happen
to be negative, and the objective is to find
item prices to maximize the overall revenue.
In the corresponding \emph{lottery pricing
problem}, rather than pricing individual items,
we are allowed to offer an arbitrary system of
lotteries to the consumers.  A lottery $\lambda = (\phi,p)$
is defined by its \emph{probability vector}
$\phi = 
(\phi_1,\ldots,\phi_n), 
\, \sum_{i=1}^n \phi_i \leq 1,$
and by its price $p \in \R_0^+$.  
To define the problem's objective, we need to
specify how consumers select the lottery (or
lotteries) to purchase from a given lottery
system.  This paper considers two alternatives,
which we call the {\em buy-one} and {\em buy-many} models.

\paragraph{The Buy-One Model.}
In the \emph{buy-one model}, we assume that the
consumer picks exactly one lottery from the 
system offered to her.  A consumer type
$\ct=(v_1,\ldots,v_n)$ picking a lottery $\lambda=(\phi,p)$
experiences an expected utility of
$u(\ct,\lambda)=
\left( \sum_{i=1}^n \phi_i v_i \right) - p$, 
i.e.,~the expected valuation for a random
sample from $\phi$, minus the price $p$.
For any consumer type $\ct$ and
set of lotteries $\Lambda$, the utility
maximizing lotteries $\lambda \in \Lambda$
form a set $\Lambda(\ct) = \arg \max_{\lambda \in
\Lambda} \{ u(\ct,\lambda) \}.$  (This set is
well-defined assuming that $\Lambda$ is a closed
subset of $(\R_0^+)^n \times \R_0^+$.  We will
make this assumption throughout the paper.)  
Let 
$
p^+(\ct,\Lambda) = \max \{ p \,|\,
(\phi,p) \in \Lambda(\ct) \}
$
denote the maximum price\footnote{Our assumption 
that consumers choose the highest-priced lottery
in $\Lambda(\ct)$ is without loss of generality,
up to a multiplicative factor of $(1-\varepsilon)$ in
the profit.  This is because we can modify $\Lambda$ 
by decreasing the price of each item by a factor of 
$1-\varepsilon$.  The discount is proportional to
the original price, so the lottery in $\Lambda(\ct)$
that is most attractive to consumer $\ct$ is the
one whose modified price is 
$(1-\varepsilon)p^+(\ct,\Lambda).$  Moreover,
if a lottery outside $\Lambda(\ct)$ is even
more attractive than this one, its discount
must be greater than $\varepsilon p^+(\ct,\Lambda)$
so its modified price must be even higher 
than $(1-\varepsilon p^+(\ct,\Lambda).$
} that a 
utility-maximizing consumer of type $\ct$
might pay when choosing from the lottery set $\Lambda$.
The \emph{profit} 
of $\Lambda$ is defined by 
$
\profit(\Lambda) = \int p^+(\ct,\Lambda) \, d \cd, 
$
where the integral can be expressed as finite weighted
sums provided that $\cd$ has finite support.
The lottery pricing problem in the buy-one 
model asks for a system of lotteries
$\Lambda$ maximizing the profit $\profit(\Lambda).$

\paragraph{The Buy-Many Model.}
While it appears natural in the unit-demand setting
that consumers would choose to purchase a single lottery,
there is a subtle issue to be considered here.  Assume
that consumers can dispose freely of items and consider 
the following simple example with a single item, a single
consumer, and two lotteries.  The consumer values the item
at $1$, lottery $1$ has probability $1$ for allocating the
item and price $1/2$, lottery $2$ has probability $1/2$
and price $2^{-t}$ for some large $t$.  Purchasing lottery
$1$ yields utility $1/2$, while lottery $2$ yields utility
$1/2 - 2^{-t}$, and so the consumer should purchase lottery
$1$ at price $1/2$.  However, imagine the consumer could
decide to purchase $t$ copies of lottery $2$ instead.  The
probability of receiving at least a single item in $t$ 
independent trials is $1-2^{-t}$ and so, under the free
disposal assumption, the resulting utility is $1-(t+1)2^{-t}$,
which is strictly better than $1/2$ for $t \ge 4$.

These considerations motivate the following \emph{buy-many
model}.  Informally, a consumer in the buy-many model 
can purchase any bundle (i.e., multi-set) of lotteries 
and receives an independent sample from each.  Given
this sampling rule, the consumer chooses a utility-maximizing
bundle.  More formally, for an $m$-tuple of lotteries 
$(\lambda_1,\ldots,\lambda_m)$
with probability vectors $\phi_1,\ldots,\phi_m$ and
prices $p_1,\ldots,p_m$, let $(i_1,\ldots,i_m)$
denote an $m$-tuple of independent random variables 
with distributions $\phi_1,\ldots,\phi_m$, respectively.
A consumer $\ct$ purchasing $(\lambda_1,\ldots,\lambda_m)$
experiences a utility of 
$$
u(\ct,\lambda_1,\ldots,\lambda_m) = 
\expect \left[ \max_{1 \leq j \leq m} v_{i_j} \right] - \sum_{j=1}^m p_j.
$$
Note that $u(\ct,\lambda_1,\ldots,\lambda_m)$ does
not depend on the ordering of the sequence $\lambda_1,\ldots,\lambda_m.$
When we refer to the \emph{marginal utility} of adding a 
lottery $\lambda$ to a given bundle $\lambda_1,\ldots,\lambda_m$,
we mean the difference $u(\ct,\lambda_1,\ldots,\lambda_m,\lambda)
- u(\ct,\lambda_1,\ldots,\lambda_m).$

As before, we can define the set $\Lambda^{\mathrm{BM}}(\ct)$ of 
utility-maximizing bundles of lotteries for consumer $\ct$.
This set is well-defined as long as $\Lambda$ is a 
closed subset of $(\R_0^+)^n \times \R^+$.  We can 
define $\profit^{\mathrm{BM}}(\Lambda)$ 
as the revenue achieved when selling to consumer distribution $\cd$ 
assuming every consumer pays for the most expensive bundle of 
lotteries in $\Lambda^{\mathrm{BM}}(\ct).$  
The lottery pricing problem in the buy-many
model asks for a system of lotteries
$\Lambda$ maximizing the profit $\profit^{\mathrm{BM}}(\Lambda)$. Throughout the paper we denote by $r^*(\cd )$ and $r^*_L(\cd )$ the optimal revenue obtainable from consumer distribution $\cd$ via a pure item pricing or a lottery system in the appropriate model.


\section{The Buy-One Model}
\label{buyOne}

We start by considering lottery pricing in the buy-one model. While it turns out that the potential for increased revenue compared to pure item pricings is limited in dimensions $1$ and $2$ (as suggested by our introductory example), surprisingly the situation changes dramatically already in dimension $4$, where an arbitrary increase in revenue is possible. Remarkably, if the consumer distribution has finite support, computing the optimal lottery system in the buy-many model reduces to solving a linear program, thereby circumventing the known hardness results for item pricing.

\subsection{A Polynomial-Time Algorithm}
\label{polytimeAlgo}

\begin{wrapfigure}{r}{.55\textwidth}
  \vspace{-7mm}
  \begin{eqnarray}
  \mbox{max.} & & \sum _{j=1}^m \mu _jz_j\\
  \label{C1} \mbox{s.t.} & & \sum _{i=1}^n x_{ji}\le 1 \quad \quad \quad \quad \quad \quad \quad \quad \quad \quad \forall j\\
  \label{C2} & & \sum _{i=1}^n x_{ji}v_{ji}-z_j\ge 0 \quad \quad \quad \quad \quad \quad \, \, \, \, \forall j\\
  \label{C3} & & \sum _{i=1}^n x_{ji}v_{ji}-z_j\ge \sum _{i=1}^n x_{ki}v_{ji}-z_k \quad \forall j,k
  \end{eqnarray}
  \vspace{-12mm}
\end{wrapfigure}

Let $\mathcal{C}$ be a finite support distribution on consumer types $v_j=(v_{j1},\ldots ,v_{jn})$ with probabilities $\mu _j$ for $1\le j\le m$, and consider the LP to the right (omitting non-negativity constraints $x_{ji},z_j\ge 0$),
where $x_j=(x_{j1},\ldots ,x_{jn})$ is the probability vector of a lottery offered at price $z_j$. Constraints (\ref{C1}) ensure that lotteries are feasible. Constraints (\ref{C2}) guarantee that consumers of type $v_j$ can afford to buy lottery $x_j$. Constraints (\ref{C3}) ensure that consumer $v_j$ prefers lottery $x_j$ over all other $x_k$.

\begin{theorem}
\label{t:optRev}
For consumers specified as a finite support distribution the optimal lottery system in the buy-one model can be computed in polynomial time.
\end{theorem}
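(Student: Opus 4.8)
The plan is to prove the sharper claim that the displayed linear program has optimum value exactly $r^*_L(\cd)$, and that an optimal buy-one lottery system --- one using at most $m+1$ lotteries --- can be recovered from any optimal LP solution. Polynomial-time solvability then follows at once, since the LP has size polynomial in $m$, $n$, and the bit-complexity of the numbers $v_{ji},\mu_j$, and hence can be solved in polynomial time (e.g., by the ellipsoid method). The underlying idea is a taxation-principle correspondence: an optimal lottery system may be presented as a menu containing one designated lottery $(x_j,z_j)$ for each consumer type $v_j$, and constraints (\ref{C1}), (\ref{C2}), (\ref{C3}) encode precisely that $x_j$ is a feasible lottery, that type $v_j$ is willing to participate, and that type $v_j$ weakly prefers $(x_j,z_j)$ to every other offered lottery. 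Throughout, I use the convention --- consistent with the item-pricing definition of Section~\ref{model} --- that a consumer with no individually rational option buys nothing; equivalently, every lottery system may be taken to contain the null lottery $(\mathbf{0},0)$.

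First I would show $\mathrm{OPT}_{\mathrm{LP}}\ge r^*_L(\cd)$. Let $\Lambda$ be an arbitrary closed lottery system containing $(\mathbf{0},0)$. For each type $v_j$, closedness of $\Lambda$ ensures that $\Lambda(v_j)$ is nonempty; let $\lambda^{(j)}=(\phi^{(j)},p^{(j)})$ be its element of highest price, so that $p^{(j)}=p^+(v_j,\Lambda)$. Setting $x_{ji}:=\phi^{(j)}_i$ and $z_j:=p^{(j)}$ produces a feasible LP point: (\ref{C1}) holds because $\phi^{(j)}$ is a sub-probability vector; (\ref{C2}) because $u(v_j,\lambda^{(j)})\ge u(v_j,(\mathbf{0},0))=0$; and (\ref{C3}) because $\lambda^{(j)},\lambda^{(k)}\in\Lambda$ while $\lambda^{(j)}$ is utility-maximizing for $v_j$. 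Its objective value is $\sum_j \mu_j z_j=\sum_j \mu_j\, p^+(v_j,\Lambda)=\profit(\Lambda)$, so taking the supremum over $\Lambda$ yields the inequality.

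Next I would show $r^*_L(\cd)\ge\mathrm{OPT}_{\mathrm{LP}}$. The LP is feasible (take $x=0,z=0$) and bounded (by constraints (\ref{C1}) and (\ref{C2}), $z_j\le\sum_i x_{ji}v_{ji}\le\max_i v_{ji}$), hence has an optimal solution $(x,z)$. Form the finite --- and therefore closed --- lottery system $\Lambda:=\{(x_j,z_j):1\le j\le m\}\cup\{(\mathbf{0},0)\}$. Constraint (\ref{C2}) says type $v_j$ weakly prefers $(x_j,z_j)$ to opting out, and (\ref{C3}) says it weakly prefers $(x_j,z_j)$ to every other offered lottery, so $(x_j,z_j)\in\Lambda(v_j)$ and hence $p^+(v_j,\Lambda)\ge z_j$. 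Therefore $\profit(\Lambda)\ge\sum_j\mu_j z_j=\mathrm{OPT}_{\mathrm{LP}}$, which gives $r^*_L(\cd)\ge\profit(\Lambda)\ge\mathrm{OPT}_{\mathrm{LP}}$. Combining the two directions, $r^*_L(\cd)=\mathrm{OPT}_{\mathrm{LP}}$, and the system $\Lambda$ built from an optimal LP solution is revenue-optimal.

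The step requiring genuine care is the first direction, precisely because the lottery system handed to us need not be finite --- the model only requires it to be closed --- so one must argue that, since there are only $m$ types, ``only $m$ lotteries matter'' and passing to a finite menu costs no revenue; closedness is exactly what guarantees a well-defined utility-maximizing lottery for each type, which makes this reduction legitimate. Two smaller points also deserve a sanity check: the highest-price tie-breaking convention is what forces the extracted price $z_j$ to equal $p^+(v_j,\Lambda)$ exactly, so the LP objective matches $\profit(\Lambda)$ on the nose rather than merely bounding it; and the individual-rationality convention is genuinely needed to make (\ref{C2}) consistent with the extracted menu --- without it, $r^*_L(\cd)$ could even be unbounded, so this assumption is not merely cosmetic.
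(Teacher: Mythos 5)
Your proposal is correct and takes essentially the same route as the paper: the paper establishes Theorem~\ref{t:optRev} via exactly this LP, with constraints (\ref{C1})--(\ref{C3}) encoding lottery feasibility, individual rationality, and incentive compatibility, and your two directions simply spell out the taxation-principle equivalence between lottery systems and LP solutions that the paper leaves implicit. The only additions you make (the null lottery convention and the highest-price tie-breaking) are consistent with the model of Section~\ref{model}, so there is no gap.
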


\subsection{Lotteries in Dimensions $1$ and $2$}
\label{dim1and2}

Given a consumer distribution $\cd$, by how much can the optimal lottery revenue $r^*_L(\cd )$ exceed the optimal item pricing revenue $r^*(\cd )$? We first consider the base case of only a single item and prove that offering lotteries cannot yield higher revenue than the best single item price.

\begin{theorem}
\label{t:dim1}
Let $n=1$ and $\mathcal{C}$ be a consumer distribution. It holds that $r^*_L(\mathcal{C})=r^*(\mathcal{C})$.
\end{theorem}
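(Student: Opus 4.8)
The plan is to prove the nontrivial direction $r^*_L(\mathcal{C}) \le r^*(\mathcal{C})$; the reverse inequality is immediate, since an item price $p$ is realized by the single lottery $(1,p)$. Fix a closed lottery system $\Lambda$. When $n=1$ a consumer is a scalar $v\ge 0$, a lottery $(\phi,p)$ has $\phi\in[0,1]$, and its utility $u(v,(\phi,p))=\phi v - p$ is an affine function of $v$ of slope $\phi$. I would first pass to the induced utility function
$$U(v) \ :=\ \max\Bigl\{\,0,\ \sup_{(\phi,p)\in\Lambda}\,(\phi v - p)\,\Bigr\},$$
the upper envelope of the lottery lines together with the horizontal line through the origin (abstention, available by individual rationality). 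Then $U$ is convex, nondecreasing, $1$-Lipschitz, and $U(0)=0$; let $q:=U'_+$ be its right derivative, which is nondecreasing, right-continuous, $[0,1]$-valued, and satisfies $U(v)=\int_0^v q(t)\,dt$. (Morally this is the one-dimensional case of the Riley--Zeckhauser principle~\cite{RZ} that single-parameter optima are deterministic.)

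The heart of the argument is the pointwise bound
$$p^+(v,\Lambda) \ \le\ q(v)\,v - U(v) \ =\ q(v)\,v - \int_0^v q(t)\,dt.$$
For the inequality: any lottery $(\phi,p)$ that is utility-maximizing for $v$ has $\phi v - p = U(v)$, so its line supports the convex function $U$ at $v$, hence $\phi\in\partial U(v)$, so $\phi\le q(v)$ and $p=\phi v-U(v)\le q(v)v-U(v)$. Existence of a utility-maximizing lottery is guaranteed by closedness of $\Lambda$; note we never need the extreme slope $q(v)$ to be attained by an actual lottery, only that every maximizer's slope is at most $q(v)$.

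Finally I would decompose the monotone allocation rule $q$ into thresholds, i.e.\ into item pricings. For $s\in(0,1]$ set $T_s:=\inf\{v\ge 0: q(v)\ge s\}$ (with $T_s=\infty$, contributing $0$, when the set is empty). Right-continuity and monotonicity of $q$ give $q(v)\ge s\iff v\ge T_s$, hence $q(v)=\int_0^1\mathbf{1}[v\ge T_s]\,ds$ and $\int_0^v q(t)\,dt=\int_0^1 (v-T_s)^+\,ds$. Subtracting, using $v\,\mathbf{1}[v\ge T_s]-(v-T_s)^+=T_s\,\mathbf{1}[v\ge T_s]$, integrating over $v\sim\mathcal{C}$, and swapping the two integrals (Tonelli; all integrands nonnegative) yields
$$\profit(\Lambda) \ \le\ \int_0^1 T_s\cdot\prob_{v\sim\mathcal{C}}[\,v\ge T_s\,]\ ds \ =\ \int_0^1 \bigl(\text{revenue of item price }T_s\bigr)\ ds \ \le\ r^*(\mathcal{C}),$$
since each integrand is at most $r^*(\mathcal{C})$. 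In words: in one dimension every lottery system collects exactly an average over $s\in[0,1]$ of the revenues of the item prices $T_s$, so it cannot beat the best one.

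The step I expect to be the main obstacle is the bookkeeping behind the pointwise bound: extracting a well-defined monotone allocation rule $q$ from a merely-closed (and possibly infinite-support) lottery system, and verifying $U(v)=\int_0^v q(t)\,dt$ together with $p^+(v,\Lambda)\le q(v)v-U(v)$ -- the threshold decomposition is then a short computation. I would also remark that nothing changes in the buy-many model: the set of achievable (aggregate allocation probability, total price) pairs is different, but allocation probabilities still lie in $[0,1]$, so the envelope $U$, its derivative $q$, and the decomposition go through verbatim.
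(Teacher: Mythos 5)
Your proof is correct and is essentially the paper's argument recast in continuous form: your thresholds $T_s$ with weight $ds$ are exactly the paper's random item price $(p_j-p_{j-1})/(\phi_j-\phi_{j-1})$ assigned with probability $\phi_j-\phi_{j-1}$, i.e., both decompose the marginal allocation along the convex utility envelope into posted prices and compare each consumer's payment pointwise (your $p^+(v,\Lambda)\le q(v)v-U(v)=\int_0^1 T_s\mathbf{1}[v\ge T_s]\,ds$ is the layer-cake version of the paper's telescoping identity $\expect[P]=\sum_{i\le j}(p_i-p_{i-1})=p_j$). The only substantive difference is that your envelope formulation also covers infinite closed menus and does not presuppose a finite optimal lottery system, whereas the paper's write-up sorts a finite menu and obtains exact payment equality.
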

\begin{proof}
Let $\Lambda=\{ (\phi _0,p_0),(\phi _1,p_1),\ldots ,(\phi _m,p_m)\}$ denote the optimal lottery system and assume w.l.o.g. that $0=\phi _0<\phi _1<\cdots <\phi _m$ and $0=p_0\le p_1\le \cdots \le p_m$. Now consider a consumer with valuation $v$ who decides to purchase lottery $(\phi _j,p_j)$. Since $(\phi _j,p_j)$ is the utility maximizing choice it must be the case that $\phi _jv-p_j\ge \phi _{j-1}v-p_{j-1}$ and $\phi _jv-p_j> \phi _{j+1}v-p_{j+1}$ and, thus,\[
v\in \left[\frac{p_j-p_{j-1}}{\phi _j-\phi _{j-1}},\frac{p_{j+1}-p_j}{\phi _{j+1}-\phi _j}\right].\]
Assume that we randomly assign price $(p_j-p_{j-1})/(\phi _j-\phi _{j-1})$ with probability $\phi _j-\phi _{j-1}$ to the item for $1\le j\le m$. For the payment $P$ made by the consumer with value $v$ we may write that\[
\expect [P] = \sum _{i=1}^j (\phi _i-\phi _{i-1})\frac{p_i-p_{i-1}}{\phi _i-\phi _{i-1}} = \sum _{i=1}^j (p_i-p_{i-1}) = p_j-p_0=p_j,\]
which is just what she pays given the optimal lottery system.
\end{proof}

Looking at the case of $2$ distinct items and only allowing lotteries that have total probability $1$ of allocating either of the two items, it is possible to still derive a constant bound on the ratio between the optimal lottery and item pricing revenue. The proof, which is found in Appendix \ref{a:dim2}, is based on a combination of geometric arguments and a reduction to the $1$-dimensional case above. In dimension $2$, every lottery corresponds to an {\em indifference line}, which is the set of consumer valuations that would result in zero utility from buying this lottery (the {\em indifferent} consumers). Fig. \ref{fig:2dim} depicts a lottery system in dimension $2$ and the corresponding item pricing constructed in the proof of Theorem \ref{t:dim2}.

\begin{theorem}
\label{t:dim2}
For $n = 2$ it holds that $r^*_L(\cd ) \leq 3r^*(\cd )$ for any consumer distribution $\cd$.
\end{theorem}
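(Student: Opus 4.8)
The plan is to fix an optimal (or near-optimal) lottery system $\Lambda$ for $\cd$ and extract from it a bounded number of item pricings whose revenues add up to at least $r^*_L(\cd)$; since each is a feasible item pricing and there will be three of them, the best one recovers at least $\tfrac13 r^*_L(\cd)$, giving $r^*(\cd)\ge\tfrac13 r^*_L(\cd)$. As in the setting of the surrounding discussion I would work with lottery systems in which every lottery has $\phi_1+\phi_2=1$, treating ``buy nothing'' as an implicit lottery $(0,0,0)$; a consumer of type $v$ then selects the highest point of the upper envelope of the affine utility functions $v\mapsto\phi_1v_1+\phi_2v_2-p$ together with the zero function, and the revenue from $v$ is the price of the chosen lottery. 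The geometric input is that the indifference line $\{\phi_1v_1+\phi_2v_2=p\}$ of a full-support lottery always passes through the point $(p,p)$ on the diagonal and has non-positive slope, so as $\phi_1$ ranges over $[0,1]$ the line rotates about $(p,p)$ between horizontal and vertical. Two consequences matter: (a) on the diagonal $\{v_1=v_2=t\}$ every lottery behaves exactly like a deterministic sale at price $p$, so diagonal consumers ``see'' only the multiset of prices; and (b) for $v$ with, say, $v_1\le v_2$, the utility of $v$ for $(\phi_1,\phi_2,p)$ equals $v_1+\phi_2(v_2-v_1)-p$, so on the half-space $\{v_1\le v_2\}$ the system induces a genuine one-dimensional lottery system in the gap variable $w=v_2-v_1$ (probability $\phi_2$, price $p$), except that every consumer additionally enjoys a participation subsidy of $v_1$.

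Carrying out reduction (b) on each of the two half-spaces, and handling a neighborhood of the diagonal via (a), produces three one-dimensional subproblems, one ``aligned'' with each extreme allocation of the unit-demand polytope (item $1$, item $2$, and the null outcome). Applying Theorem~\ref{t:dim1} to each subproblem yields a randomized item pricing --- item $1$ priced alone, item $2$ priced alone, and a pricing of both items --- whose revenue matches the corresponding one-dimensional lottery revenue pointwise; and since $v_2\ge v_2-v_1$ for the half-space parts (resp.\ $\max(v_1,v_2)\ge p$ for the diagonal part, again because $p\le\phi_1v_1+\phi_2v_2\le\max(v_1,v_2)$), the two-dimensional item pricing extracts at least as much from each consumer as the matched one-dimensional lottery. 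Since a randomized item pricing's revenue is an average of deterministic ones, $r^*(\cd)$ dominates the revenue of each, so it suffices to show the three revenues sum to at least $r^*_L(\cd)$.

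That summation bound is the crux, and the step I expect to be the main obstacle: one must show that for every consumer type $v$ the price it pays in $\Lambda$ is at most the total extracted by the three pricings. The difficulty is twofold. First, the induced one-dimensional problems are not literally lottery systems as in Theorem~\ref{t:dim1}: the subsidy term $v_1$ lets $\Lambda$ collect revenue from consumers for whom the ``gap'' lottery alone has negative utility, and the induced price schedule need not be anchored at $0$ (the buy-nothing option sits at the vertex $(0,0)$, not on the $\phi$-continuous frontier of full-support lotteries), so the matching must be set up carefully --- or the extra revenue bounded separately, using the geometry to show such ``subsidy-driven'' consumers still have $\max(v_1,v_2)$ comparable to the price paid. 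Second, the three regions are not cleanly separated: a consumer in the $\{v_1\le v_2\}$ region may purchase a lottery whose probability vector places most of its mass on item $1$, so the accounting cannot be a region-by-region identity; instead each consumer's payment must be charged against the \emph{sum} over the three pricings, using the concurrency of the indifference lines at $(p,p)$ and convexity of the utility surface to control which lotteries a given consumer can be tempted by. The constant $3$ is precisely the number of extreme allocations in two dimensions --- item $1$, item $2$, and nothing --- which is morally why this technique loses a factor of this size (and, the proof will show, no more). A concluding step addresses lottery systems that also offer lotteries with $\phi_1+\phi_2<1$.
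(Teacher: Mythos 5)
Your write-up is an outline, not a proof: the step you yourself call ``the crux'' --- showing that each consumer's payment under $\Lambda$ can be charged against the three extracted pricings, despite the additive ``subsidy'' $v_1$ in your gap-variable reduction and despite consumers in the half-space $\{v_1\le v_2\}$ possibly buying lotteries weighted toward item $1$ --- is precisely the part you leave unargued, and it is where all the difficulty sits. The paper resolves exactly these two obstacles with a normalization you are missing: for each consumer type $(v_x,v_y)$ obtaining utility $u$ from her chosen lottery, replace her by $(v_x-\delta,v_y-\delta)$ with $\delta=\min\{u,v_x,v_y\}$. Because every lottery has $\phi_x+\phi_y=1$, this shift lowers her utility for \emph{every} lottery by the same amount, so her choice and payment under $\Lambda$ are unchanged, while any item pricing's revenue on the shifted distribution only lower-bounds its revenue on the original one. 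After the shift every consumer either lies on a coordinate axis or receives zero utility, i.e.\ sits on the boundary of the convex region of types that cannot afford any lottery (the ``indifference polygon''). This is what kills both your subsidy term and the cross-region accounting problem in one stroke.

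With that reduction, the paper's argument is quite different in structure from your three one-dimensional subproblems: axis consumers are genuinely one-dimensional and are handled by applying the Theorem~\ref{t:dim1} rounding to the projected systems $\Lambda_x$ and $\Lambda_y$ (chosen with probability $1/3$), recovering their lottery payments exactly; polygon consumers pay, by Lemma~\ref{l:geomPrice}, the diagonal intercept of the tangent line to the polygon at their type, and a \emph{single deterministic} two-item pricing $(x^*/2+\delta',\,y^*/2+\delta')$ chosen to lie on the polygon (probability $2/3$) recovers at least half of that price for every such consumer, via a convexity case analysis on the tangent lines; mixing gives the factor $3$. So your decomposition into ``item $1$ / item $2$ / null'' one-dimensional systems is not the route the paper takes, and as it stands the summation bound you defer is not established; without the shift normalization (or a concrete substitute for it) the proof is incomplete. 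Your final remark about lotteries with $\phi_1+\phi_2<1$ is moot here, since the theorem as treated in the paper restricts to full-support lotteries in dimension $2$.
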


\pagebreak

\subsection{Higher Dimensions}
\label{higherDims}

\begin{wrapfigure}{l}{.3\textwidth}
  \vspace{-7mm}
  \begin{center}
    \epsfig{file=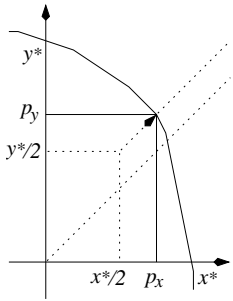,width=30mm}
    \caption{\label{fig:2dim} Pricing in dimension $2$.}
  \end{center}
  \vspace{-7mm}
\end{wrapfigure}

The results in Theorems \ref{t:dim1} and \ref{t:dim2} suggest that it should be possible to derive a general bound for the revenue gap between lottery and item pricing depending on the problem dimension in some way. In fact, consider the special case of consumers with {\em uniform valuations}, i.e., each consumer has value $v$ for all items in some set $S$ and value $0$ for any item from the complement of $S$. Grouping consumers according to the set $S$ they are interested in and applying the randomized rounding technique from Theorem \ref{t:dim2} (using for each lottery in the system its probability of allocating an item from $S$), we obtain an upper bound of $\bigo (2^n)$ on the revenue gap and this turns out to be essentially tight. The proof of Theorem \ref{t:uniVals} is found in Appendix \ref{a:uniVals}.

\begin{theorem}
\label{t:uniVals}
Let $\cd$ be a distribution on uniform valuation consumers. Then $r^*_L(\cd )/r^*(\cd )=\bigo (2^n)$. There exist distributions $\cd$ with $r^*_L(\cd )/r^*(\cd )=\tilde{\Omega }(2^n)$.
\end{theorem}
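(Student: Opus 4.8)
\noindent\emph{Upper bound ($r^*_L(\cd)\le(2^n-1)\,r^*(\cd)$).} The plan is to exploit the fact that, restricted to consumers who value one fixed set $S$ uniformly, the lottery pricing problem becomes one-dimensional. Fix a near-optimal lottery system $\Lambda$ and let $r_S$ denote the revenue it collects from the types whose value-set is exactly $S$; then $r^*_L(\cd)=\sum_{S}r_S$, so some nonempty $S^\ast$ has $r_{S^\ast}\ge r^*_L(\cd)/(2^n-1)$. A lottery $(\phi,p)$ is worth $v\sum_{i\in S^\ast}\phi_i-p$ to a type of uniform value $v$ on $S^\ast$, so the choices and payments of these consumers are unchanged if each lottery $(\phi,p)$ is replaced by the one-dimensional lottery $(\sum_{i\in S^\ast}\phi_i,\,p)$. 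By Theorem~\ref{t:dim1} applied to this one-dimensional sub-population, a single price $q$ for ``an item of $S^\ast$'' extracts at least $r_{S^\ast}$ from them. Realizing this as an $n$-item pricing — every item of $S^\ast$ at price $q$, every other item above all consumer values — earns at least $r_{S^\ast}$ (the remaining groups only add revenue), so $r^*(\cd)\ge r^*_L(\cd)/(2^n-1)$.

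\noindent\emph{Lower bound: the instance.} For the $\tilde\Omega(2^n)$ gap I would use one consumer type per nonempty $S\subseteq[n]$. Fix a linear extension $S_1,\dots,S_M$ (with $M=2^n-1$) of the inclusion order, so that $S\subsetneq S'$ implies $S$ appears first, and assign the type valuing $S_m$ the value $v_m=2^{M-m}$ and weight $w_m=1/v_m$ (normalization is irrelevant to the ratio). The point of this geometric, equal-revenue choice is the bound $\sum_{m:\,v_m\ge q}w_m<2/q$ for every $q>0$, which controls item pricing: given prices sorted as $p_{\sigma(1)}\le\cdots\le p_{\sigma(n)}$, bucket the sets by their cheapest item (bucket $j$ = sets containing $\sigma(j)$ but none of $\sigma(1),\dots,\sigma(j-1)$), so every set in bucket $j$ faces exactly price $p_{\sigma(j)}$; then the revenue is at most $\sum_{j}p_{\sigma(j)}\cdot(2/p_{\sigma(j)})=2n$, hence $r^*(\cd)\le 2n$.

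\noindent\emph{Lower bound: the lottery system.} For each $S_m$ I would offer the lottery $\lambda_m$ whose probability vector puts mass $\tfrac1{2|S_m|}$ on every coordinate of $S_m$ (total mass $\tfrac12$) and whose price is $v_m/(2n)$. If every type $S_m$ buys $\lambda_m$, the lottery revenue is $\sum_m w_m v_m/(2n)=M/(2n)$, which together with $r^*(\cd)\le 2n$ gives $r^*_L(\cd)/r^*(\cd)\ge(2^n-1)/(4n^2)=\tilde\Omega(2^n)$, as claimed. So the entire content of the lower bound reduces to checking incentive compatibility of this system, and that is the step I expect to be the main obstacle: it is exactly where the $\tfrac12$-mass and the $\tfrac1{2n}$-discount are tuned to cooperate with the value ordering.

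\noindent\emph{Verifying incentive compatibility.} Type $S_m$'s utility from $\lambda_m$ is $v_m(\tfrac12-\tfrac1{2n})>0$, so she buys something; her utility from a foreign $\lambda_k$ is $v_m\tfrac{|S_m\cap S_k|}{2|S_k|}-\tfrac{v_k}{2n}$, and a short rearrangement shows $\lambda_m$ is at least as good as $\lambda_k$ precisely when $v_m\bigl(\tfrac1{2n}-\tfrac{|S_k\setminus S_m|}{2|S_k|}\bigr)\le\tfrac{v_k}{2n}$. If $S_k\not\subseteq S_m$, then $|S_k\setminus S_m|\ge1$ forces the left-hand side to be $\le0$; if $S_k\subsetneq S_m$, then $S_k$ precedes $S_m$ in the linear extension, so $v_k>v_m$ and the inequality becomes $v_m\le v_k$. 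In both cases it holds, and strictly after a generic perturbation of the $v_m$ (which does not affect the bounds), so $\lambda_m$ is type $S_m$'s unique best response. The only remaining items to check — closedness/tie-breaking and the validity of the upper-bound reduction for infinite-support $\cd$ — are routine.
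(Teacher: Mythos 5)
Your proof is correct and takes essentially the same route as the paper: the upper bound groups consumers by their value-set and reduces to the one-dimensional result of Theorem~\ref{t:dim1} (you take the best set where the paper randomizes uniformly over the $2^n-1$ sets, an equivalent averaging step), and the lower bound is the paper's construction up to parameters. The paper orders the subsets by decreasing cardinality with base-$n$ geometric values, full-mass uniform lotteries priced at $v_j/n$ and (essentially) equal-revenue weights, while you use a linear extension of inclusion, base-$2$ values, half-mass lotteries at price $v_m/(2n)$; the incentive argument (non-subsets miss with constant probability, subsets are too expensive) and the per-item tail bound giving $r^*(\cd)=\bigo(n)$ are the same, yielding the same $\Omega(2^n/n^2)=\tilde\Omega(2^n)$ gap.
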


Quite surprisingly, a similar result does not hold for general consumer distributions. We show that the difference in revenue between the optimal item pricing and the best lottery system cannot be bounded in terms of the number of items and, in fact, the gap can become arbitrarily large already in dimension $4$.

\begin{theorem}
\label{t:unbounded}
For any number of items $n\ge 4$ the maximum gap $r^*_L(\mathcal{C})/r^*(\mathcal{C})$ between the revenue obtainable by a system of lotteries in the buy-one model and the optimal item pricing cannot be bounded as a function of $n$.
\end{theorem}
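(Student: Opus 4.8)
The plan is to prove the statement directly by exhibiting hard instances. It suffices to show that for every constant $c$ there is a finite-support distribution $\cd$ on $(\R_0^+)^4$ with $r^*_L(\cd) > c\cdot r^*(\cd)$. Indeed, given such a $\cd$, embed every valuation vector into $(\R_0^+)^n$ by setting coordinates $5,\dots,n$ equal to $0$: these coordinates are then worthless to every consumer, so an optimal item pricing or lottery system gains nothing from them and both optimal revenues are unchanged, giving the same gap for all $n\ge 4$. Thus the goal is a family $\{\cd_N\}_{N\ge 1}$ of $4$-dimensional instances with $r^*_L(\cd_N)/r^*(\cd_N)\to\infty$.

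The construction is geometric and revolves around a family of $N$ directions --- unit vectors $u_1,\dots,u_N$ --- together with scales $\rho_1,\dots,\rho_N$ and weights $\mu_1,\dots,\mu_N$. Consumer $j$ receives a valuation vector that, up to a fixed non-negative shift placing it in $(\R_0^+)^4$, equals $\rho_j u_j$; one of the four coordinates plays the role of a homogenizing, price-like coordinate, so the genuine geometric action takes place in three dimensions. The directions are chosen with a tightly prescribed angular structure so that two properties hold at once. First, for each $j$ the lottery $\lambda_j=(\phi_j,p_j)$ whose probability vector is aligned with $u_j$, and whose price $p_j$ is set so that consumer $j$'s residual utility tends to $0$, is strictly preferred by consumer $j$ over every other $\lambda_k$; the relevant pairwise preferences are controlled by the inner products $\langle u_k,u_j\rangle$, which is where the geometry of unit vectors enters. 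Second, the point set $\{\rho_j u_j\}$ is badly aligned with all coordinate axes and with the ``diagonal'' directions $e_i-e_{i'}$ that govern item-pricing indifference, so that no single price vector can separate the consumer types. The conflict between these two requirements is exactly what forces the threshold $n=4$: Theorems~\ref{t:dim1} and~\ref{t:dim2} amount to saying that too few directions with the first property can coexist in dimensions $1$ and $2$, whereas in dimension $4$ arbitrarily many can.

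Granting the construction, the lower bound is the easy half: offering the menu $\{\lambda_j\}_j$ in the buy-one model, one checks individual rationality ($\langle\phi_j,\rho_j u_j\rangle\ge p_j$) and incentive compatibility (consumer $j$ weakly prefers $\lambda_j$, using the angular structure and the fact that residual utilities are tiny), obtaining $r^*_L(\cd_N)\ge\sum_j\mu_j p_j$; the choice of the $\rho_j$ and $\mu_j$ makes this quantity grow without bound relative to $r^*(\cd_N)$. The crux --- and the step I expect to be the main obstacle --- is the matching bound $r^*(\cd_N)=\bigo(1)$, uniformly in $N$. Here the plan is to fix an arbitrary price vector $p$, condition on which item each consumer buys, note that on the event ``buys item $k$'' the payment is exactly $p_k$, and argue that the set of consumers who buy item $k$ --- cut out by axis-aligned constraints from the point set $\{\rho_j u_j\}$ --- forms a sub-instance whose relevant geometry is effectively one- or two-dimensional, so that an estimate in the spirit of Theorems~\ref{t:dim1} and~\ref{t:dim2} caps the revenue from each item by a constant; summing over the four items gives $r^*(\cd_N)=\bigo(1)$. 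The delicate point, and where essentially all the work goes, is to choose $u_1,\dots,u_N$ together with the scales so that the angular spread powering the lower bound provably does \emph{not} also hand item pricing extra leverage, and to make this robust over all price vectors. Combining the two bounds yields $r^*_L(\cd_N)/r^*(\cd_N)\to\infty$, so for every $n\ge 4$ the gap cannot be bounded as a function of $n$.
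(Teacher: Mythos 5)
There is a genuine gap --- in fact two. Your outline has the right general shape (a family of directions on a sphere in the positive orthant, one lottery aligned with each consumer's direction, incentive compatibility controlled by pairwise inner products, geometric scales and weights), which is indeed the paper's strategy, but both load-bearing steps are missing. First, you never construct or count the family of directions. The engine of the paper's proof is a packing lemma (Lemma~\ref{t:vectorPacking}): on the positive-orthant sphere of radius $1/\sqrt{n}$ one can place $\Omega(q^{(n-1)/2})$ vectors with pairwise inner products at most $1/n-1/q$. The angular gap $1/q$ is exactly what limits the IC-compatible price of the lottery aimed at consumer $j$ to about $2^j/q$ (against weight $2^{-j}$), so each consumer contributes only about $1/q$ to lottery revenue; the theorem lives or dies on whether the \emph{number} of admissible directions grows superlinearly in $q$, i.e.\ $(n-1)/2>1$, which is precisely where the threshold $n\ge 4$ comes from. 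Saying ``arbitrarily many directions with the required angular structure can coexist in dimension $4$'' misses this tradeoff: the prices you may charge shrink with the angular separation, and without a quantitative packing bound the claim that $\sum_j \mu_j p_j$ outgrows $r^*(\cd_N)$ is only asserted, not proved.

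Second, the step you yourself call the crux --- $r^*(\cd_N)=\bigo(1)$ uniformly in $N$ --- is attacked with a tool that does not apply. Theorems~\ref{t:dim1} and~\ref{t:dim2} bound the \emph{ratio} $r^*_L/r^*$ in dimensions $1$ and $2$; they say nothing about the absolute item-pricing revenue of a sub-instance, so an ``estimate in their spirit'' cannot cap the revenue per item by a constant, and your heuristic that they express ``too few directions coexisting'' in low dimension is not how the threshold arises. The paper's mechanism is different and elementary: valuations are scaled geometrically, $\tilde v_j=2^j v_j$ with probability weight $\mu_j=2^{-j}$ (an equal-revenue distribution in disguise), so for any price $p\in(2^{k-1},2^k]$ on any single item only consumers with index at least $k$ can afford it, a total mass of at most $2^{-k+1}$, giving revenue at most $2$ per item and $2n$ overall --- with no case analysis over which consumers buy which item and no dependence on the geometry of the directions. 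Your proposal carries the scales $\rho_j$ and weights $\mu_j$ as unspecified parameters but never instantiates this mechanism; ``the choice of the $\rho_j$ and $\mu_j$ makes this quantity grow without bound relative to $r^*$'' restates the goal rather than proving it. As written, the proposal is a plausible sketch of the paper's construction with both of its essential arguments absent, one of them replaced by an inapplicable reduction.
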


{\bf Proof of Theorem \ref{t:unbounded}:}
We view lotteries and valuations as normalized vectors in the all-positive orthant of $\R ^n_+$. The main observation is that valuation vectors with some minimum distance from each other allow for price discrimination among these consumers via carefully chosen lotteries. Our construction relies essentially on the following technical lemma, which gives a lower bound on the number of valuation vectors we can pack without violating our minimum distance constraint. Let $S^n_r$ be the $n$-dimensional sphere with radius $r$ centered at the origin. By $S^{n+}_r$ we refer to its intersection with the all-positive orthant $\R ^n_+$. The proof of Lemma \ref{t:vectorPacking} is found in Appendix \ref{a:vectorPacking}.

\begin{lemma}
\label{t:vectorPacking}
Let $n\ge 1$ be given. For every $q\ge 2n$ there exists a set $\mathcal{V}^n_q$ of vectors in $S^{n+}_{1/\sqrt{n}}$, such that $v\cdot w\le 1/n-1/q$ for all $v,w\in \mathcal{V}^n_q$ with $v\not= w$ and $|\mathcal{V}^n_q|=\Omega (q^{(n-1)/2})$.
\end{lemma}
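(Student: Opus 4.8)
\textbf{Proof proposal for Lemma \ref{t:vectorPacking}.}

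The plan is to realize the desired vectors as the (normalized) images of points on a moment-like curve in $\R^{n-1}$, so that the inner-product bound follows from the geometry of the curve rather than from a volume/packing argument. First I would reduce to a question about unit vectors on a sphere of radius $r = 1/\sqrt{n}$ in the positive orthant: writing $v = (1/\sqrt{n})\hat v$ and $w = (1/\sqrt{n})\hat w$ with $\hat v,\hat w$ unit vectors, the requirement $v\cdot w \le 1/n - 1/q$ becomes $\hat v\cdot\hat w \le 1 - n/q$, i.e.\ $\|\hat v - \hat w\|^2 \ge 2n/q$. So it suffices to pack $\Omega(q^{(n-1)/2})$ unit vectors in the positive orthant with pairwise Euclidean distance at least $\sqrt{2n/q}$, and then rescale. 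The positive-orthant constraint is easy to maintain as long as we keep all coordinates bounded away from $0$ in a fixed way; the only real content is getting $\Omega(q^{(n-1)/2})$ points that are pairwise $\Omega(1/\sqrt q)$-separated while lying on (a patch of) the unit sphere.

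The key step is the construction. I would take a smooth embedded curve $\gamma:[0,1]\to S^{n-1}$ whose image lies in the relative interior of the positive orthant (for instance, start from a degree-$(n-1)$ polynomial curve $t\mapsto(1, t, t^2,\ldots, t^{n-1})$ shifted and scaled so every coordinate lies in, say, $[1,2]$, then normalize to the unit sphere); such a $\gamma$ has a well-defined arc length and, crucially, its image has a tubular neighborhood of some fixed width $\rho_0>0$ inside which the nearest-point projection onto $\gamma$ is well-behaved. Now sample the curve at $N$ equally-spaced arc-length points $p_1,\ldots,p_N$. Two facts drive the estimate: (i) consecutive samples are at distance $\Theta(1/N)$, and because the curve has bounded curvature, any two samples $p_i,p_j$ are at Euclidean distance $\Theta(|i-j|/N)$ as long as $|i-j|/N$ is below a fixed constant $c_0$ (here the curve being embedded and having bounded geometry is what prevents it from looping back near itself), and for $|i-j|/N \ge c_0$ the distance is $\Omega(1)$; (ii) therefore choosing every $k$-th sample point, with $k$ a suitable constant, forces all pairwise distances to be $\Omega(1/N)$. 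Setting $N = \Theta(\sqrt q)$ makes the separation $\Omega(1/\sqrt q) = \Omega(\sqrt{n/q})$ as required (absorbing the dimension-dependent constant, which is fine since $n$ is fixed), and we keep $\Theta(N) = \Theta(\sqrt q)$ points --- but this only gives $\Omega(q^{1/2})$, not $\Omega(q^{(n-1)/2})$.

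To get the full $\Omega(q^{(n-1)/2})$ I would instead pack on an $(n-1)$-dimensional patch rather than a $1$-dimensional curve: parametrize a small spherical cap of $S^{n-1}$ lying in the positive orthant by a ball $B\subset\R^{n-1}$ via a bi-Lipschitz chart (e.g.\ orthogonal projection onto the tangent hyperplane at an interior point, which is bi-Lipschitz with constants depending only on $n$ on a cap of fixed size), take the standard grid of $\R^{n-1}$ with spacing $\Theta(1/\sqrt q)$ intersected with $B$, and pull it back. The grid has $\Omega((\sqrt q)^{\,n-1}) = \Omega(q^{(n-1)/2})$ points, bi-Lipschitzness transfers both the lower bound on pairwise distances (up to the fixed chart constant, which we counter by shrinking the grid spacing by a constant) and the count, and the positive-orthant membership is guaranteed by choosing the cap well inside the orthant. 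Finally rescale by $1/\sqrt n$ to land in $S^{n+}_{1/\sqrt n}$ and translate the distance bound back into the inner-product bound $v\cdot w\le 1/n - 1/q$ via the identity above, adjusting the constant in the grid spacing so the inequality is the clean $1/n-1/q$ rather than $1/n - \Theta(1/q)$.

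\textbf{Main obstacle.} The delicate point is the lower bound on distances between \emph{non-adjacent} sample points (in the curve version) or the faithful transfer of metric data through the chart (in the patch version): one must rule out the sphere ``folding back'' so that far-apart parameter values map to nearby points. This is handled by working on a patch/curve of fixed size with bounded geometry (curvature, injectivity radius bounded in terms of $n$ only), so that the chart or the arc-length parametrization is bi-Lipschitz with $n$-dependent-but-$q$-independent constants; since $n$ is a fixed constant throughout, these constants are harmless and only affect the hidden constant in $\Omega(q^{(n-1)/2})$. A secondary bookkeeping obstacle is making sure $q\ge 2n$ is exactly the threshold at which the grid spacing $\Theta(1/\sqrt q)$ is small enough that the cap actually contains a nontrivial grid and the inner-product slack $1/q$ is smaller than the cap's intrinsic diameter-squared; this just pins down the constants and requires no new idea.
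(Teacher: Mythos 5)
Your proposal is correct, but after the first (shared) step it takes a genuinely different route from the paper. Both arguments start from the same identity: on the sphere of radius $1/\sqrt{n}$ one has $v\cdot w = 1/n - \tfrac12\|v-w\|_2^2$, so the constraint $v\cdot w\le 1/n-1/q$ is exactly a pairwise-separation constraint $\|v-w\|\ge\sqrt{2/q}$ (your rescaled form $\|\hat v-\hat w\|\ge\sqrt{2n/q}$ is the same statement). The paper then produces the separated set non-constructively: it greedily selects a maximal family of balls of radius $\sqrt{2/q}$ centered on $S^{n+}_{1/\sqrt n}$, observes that maximality forces the slightly enlarged balls to cover a thin shell around the spherical patch, and lower-bounds the number of centers by comparing the Lebesgue measure of that shell to the measure of a single ball, yielding $\Omega(q^{(n-1)/2})$. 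You instead build the set explicitly: a grid of spacing $\Theta(1/\sqrt q)$ in a fixed ball of the tangent hyperplane at an interior point of a cap lying strictly inside the positive orthant, pulled back to the sphere; since orthogonal projection onto the hyperplane is $1$-Lipschitz and injective on the cap, the pulled-back points inherit separation at least the grid spacing, the count $\Omega(q^{(n-1)/2})$ is immediate from counting grid points, and tuning the spacing constant gives the clean bound $1/n-1/q$. (Your curve-based first attempt only yields $\Omega(\sqrt q)$, as you yourself note, but the cap-and-grid version replaces it and is complete modulo routine constant bookkeeping; all constants may depend on $n$, which is all the lemma permits anyway, as the paper's own constant does too.) The trade-off: the paper's greedy-plus-volume argument needs no chart and no discussion of the sphere's local geometry, only the fact that a maximal packing is a covering together with elementary measure estimates of a shell, whereas your construction is fully explicit and avoids measure computations entirely, at the cost of having to justify the bi-Lipschitz/contraction properties of the tangent-plane chart and to keep the cap inside the orthant.
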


For a given choice of $n$ and $q$ let now the set of vectors $\mathcal{V}^n_q=\{ v_1,\ldots ,v_{(\ell (q))}\}$ with $\ell (q)=\Omega (q^{(n-1)/2})$ and $v_i\cdot v_j\le 1/n-1/q$ for any $i\not= j$ be given. We will define a unit-demand pricing instance based on these vectors. Consumer distribution $\cd$ will be defined as a finite support distribution. For each $v_j$ we define a consumer type with valuation vector $\tilde{v}_j=2^j\cdot v_j$ and probability $\mu _j=2^{-j}$. Since $||v_j||_1 \le \sqrt{n}||v_j||_2=1$, vectors from $\mathcal{V}^n_q$ can also be interpreted as lotteries. We define lotteries $\lambda _j$ with probability vectors $\phi _j=v_j$ and assign them price $p_j=(1/q)\cdot 2^j$.

Consider the utility $u(\tilde{v}_j,\lambda _j)$ of consumer type $\tilde{v}_j$ when purchasing lottery $\lambda _j$. We may write that $u(\tilde{v}_j,\lambda _j) = \tilde{v}_j\cdot \phi _j-(1/q)\cdot 2^j = 2^j(v_j)^2 -(1/q)\cdot 2^j=(1/n-1/q) 2^j$, by the fact that $||v_j||_2=1/\sqrt{n}$. On the other hand, the consumer type's utility from any other lottery $\lambda _i$ is bounded above by $u(\tilde{v}_j,\lambda _i) = \tilde{v}_j\cdot \phi _i=\left( v_j\cdot v_i\right) 2^j \le (1/n-1/q) 2^j$, since $v_j\cdot v_i\le 1/n-1/q$ for all $v_j,v_i\in \mathcal{V}^n_q$. Thus, given the lottery system defined above each consumer $\tilde{v}_j$ will choose to purchase lottery $\lambda _j$ and we obtain revenue $\sum _{j=1}^{\ell (q)}2^{-j}\cdot (1/q)\cdot 2^j=\ell (q)/q=\Omega (q^{(n-3)/2})$, using that $\ell (q)=\Omega (q^{(n-1)/2})$. It remains to estimate the optimal item pricing revenue. Consider a single item priced at $p\in \R _+$ and all other items priced at $+\infty$. For $2^{k-1}<p\le 2^k$ consumer types $\tilde{v}_1,\ldots ,\tilde{v}_{k-1}$ surely have valuations of less than $p$ for all items. It follows that the total probability mass of consumers who are able to afford the item is bounded above by $\sum _{j=k}^{\ell (q)}2^{-j}\le 2^{-k+1}$ and, thus, total revenue is at most $p\cdot 2^{-k+1}\le 2$. It follows that the optimal item pricing results in revenue of at most $2n$ and for any constant $n$ we obtain a lower bound of $\Omega(q^{(n-3)/2})$ on the revenue gap. In particular, we can make this gap arbitrarily large by choosing a sufficiently large $q$ in any dimension $n\ge 4$. \hfill $\square$

\section{The Buy-Many Model}
\label{buyMany}

As we have argued before, the assumption made in the buy-one model that a consumer purchases a single utility maximizing lottery from any given system is not justifiable in general. We now continue by investigating the more realistic buy-many model, in which we allow consumers to buy any combination of lotteries maximizing their expected utility. As we will see in Section \ref{logBound}, this reduces the advantage lottery systems have over pure item pricings drastically. In particular, this implies that known inapproximability results for the item pricing problem yield similar bounds in the lottery setting and algorithmic results similar to the buy-one model cannot be obtained. In Section \ref{logLower} we prove that our bound on the revenue gap is asymptotically tight.

\subsection{Upper Bound and Hardness of Approximation}
\label{logBound}

Let an arbitrary system of lotteries in the buy-many model over $n$ distinct items be given. We assume throughout this section that the utility maximizing collection of lotteries for each consumer type given this system consists of a single lottery. This assumption is w.l.o.g., as we can add lotteries corresponding to the joint distribution of some collection of lotteries to the system until it holds. The following randomized algorithm turns the lottery system into a pure item pricing:
\begin{itemize}
\item[(1)] For each item $i$, let $p_i$ be the price of the cheapest lottery with probability at least $1/(130n^3)$ for item $i$ ($p_i=+\infty$ if no such lottery exists).
\item[(2)] With probability $1/2$, uniformly sample $t$ from $\{ -1,0,\ldots ,3\lfloor \log n\rfloor +9\}$ and assign price $2^tp_i$ to every item $i$.
\item[(3)] Else sample a single item $i$ uniformly at random. Assign price $+\infty$ to all items other than $i$. Price item $i$ at $130n^3\ee ^jp_i$ with probability $(1-1/\ee )\ee ^{-j}$ for all $j\in \N _0$.
\end{itemize}

The core idea of the algorithm is the following: Every lottery with some minimum probability of allocating some specific item defines an upper bound on the payment of consumer types prefering this item, since by buying multiple copies of the lottery at hand they can make the probability of receiving the desired item approach $1$ exponentially fast. Thus, for each item we let the cheapest lottery with some minimum probability for it define its {\em base price} and assign a random item price via a carefully tailored two stage stochastic process. We are going to argue that the above algorithm outputs an item pricing that is an expected $\mathcal{O}(\log n $)-approximation to the revenue of an optimal lottery system in the buy-many model. Note, that the algorithm is easily derandomized via exhaustive search over the entire range of (relevant) random coin flips.  Throughout this section we assume w.l.o.g. that $\sum _{i=1}^n \phi _i = 1$ for every lottery. This is easily achieved by adding a dummy item valued at $0$ by all consumers to the instance.

\begin{theorem}
\label{t:logUpper}
Given a distribution $\mathcal{C}$ of unit-demand consumers and an optimal lottery system in the buy-many model, the above algorithm returns an item pricing with revenue $r\ge 1/\mathcal{O}(\log n)\cdot r^*_{L}(\mathcal{C})$.
\end{theorem}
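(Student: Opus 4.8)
The plan is to fix an optimal buy-many lottery system $\Lambda$ with profit $r^*_L(\mathcal{C})$ (assuming w.l.o.g. that every consumer's utility-maximizing bundle is a single lottery, and that every lottery has $\sum_i \phi_i = 1$), and to show that the randomized item pricing produced by the algorithm recovers an $\Omega(1/\log n)$ fraction of $r^*_L(\mathcal{C})$ in expectation. The first key step is to establish the \emph{base-price upper bound}: if lottery $\lambda=(\phi,p)$ has $\phi_i \ge 1/(130 n^3)$, then in the buy-many model no consumer who ends up wanting item $i$ can be paying much more than $p$ for it, because buying $k$ copies of $\lambda$ drives the probability of receiving at least one copy of item $i$ to $1 - (1-\phi_i)^k$, which approaches $1$ exponentially fast. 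Hence for a consumer $\ct$ whose optimal single lottery $\lambda(\ct)$ assigns high probability (at least $1/(130n^3)$) to some item she values highly, her payment in $\Lambda$ is $O(n^3)$ times the base price $p_i$ of that item, up to the additive slack coming from the geometric tail. I will make this precise as: for the item $i^*(\ct)$ achieving $\max_i \phi_{\lambda(\ct),i} v_i$, the payment $p^+(\ct,\Lambda)$ is at most roughly $v_{i^*}$ and also comparable to $p_{i^*}$ via the free-disposal copying argument.

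The second step is the case analysis that motivates the two-stage sampling in steps (2) and (3). Split consumers into two classes. A consumer is \emph{spread} if her chosen lottery $\lambda(\ct)$ puts probability at least $1/(130n^3)$ on \emph{every} item she weakly values (equivalently, on enough items that the base price $p_i$ is defined and within a bounded factor of what she pays); for these we want step (2) to catch her. A consumer is \emph{concentrated} if almost all of the value she extracts from $\lambda(\ct)$ comes from items on which $\lambda(\ct)$ puts tiny probability; for these the ``uniformly pick one item $i$ and price it with a geometric distribution over $130n^3 e^j p_i$'' branch in step (3) is designed to extract an $\Omega(1/n)$ fraction of her contribution — the factor $1/2$ in step (2) and the $1/n$ from picking the right item are exactly why we lose only a constant there, not a $\log n$, on that branch. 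For the spread consumers, the point is that once we know $p_{i^*(\ct)}$ is within a bounded multiplicative factor of $v_{i^*(\ct)}$ and $v_{i^*(\ct)}$ is within a bounded factor of $p^+(\ct,\Lambda)$, there is a single scale $2^t$ among the $O(\log n)$ choices $t \in \{-1,0,\ldots,3\lfloor\log n\rfloor+9\}$ at which the uniform price $2^t p_{i^*}$ is simultaneously (i) affordable to $\ct$ for item $i^*$ and (ii) at least a constant fraction of $p^+(\ct,\Lambda)$; since $t$ is uniform over a set of size $O(\log n)$, this scale is hit with probability $\Omega(1/\log n)$, and the factor $1/2$ loses only a constant.

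The third step is to assemble these bounds: write $r^*_L(\mathcal{C}) = \int_{\text{spread}} p^+(\ct,\Lambda)\,d\mathcal{C} + \int_{\text{conc}} p^+(\ct,\Lambda)\,d\mathcal{C}$, lower-bound the algorithm's expected revenue on each piece separately (step (2) for the spread integral, step (3) for the concentrated integral), and observe that in each case we lose only an $O(\log n)$ (resp. $O(n)$, but that branch is weighted so that the concentrated consumers' total contribution is itself at most $O(\log n) \cdot r^*$ by the structure of optimal buy-many systems) factor; taking the better of the two branches — or rather, noting the algorithm runs the appropriate branch with probability $1/2$ — gives $r \ge \Omega(1/\log n)\cdot r^*_L(\mathcal{C})$. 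The main obstacle I anticipate is the spread/concentrated dichotomy for \emph{general} consumers (not the uniform-valuation special case): controlling how the value a consumer derives from her lottery is distributed across items of very different probabilities, and verifying that the window of $3\lfloor\log n\rfloor + O(1)$ scales in step (2) really does contain a good price for every spread consumer simultaneously. This is where the precise constant $130n^3$, the $e^j$ geometric grid in step (3), and the properties of optimal buy-many lottery systems (in particular that a consumer cannot profitably deviate by buying bundles, which bounds the prices relative to the base prices) all have to be used in concert; the rest is bookkeeping of constants.
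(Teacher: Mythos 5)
There is a genuine gap, and it sits exactly where you placed your parenthetical hand-wave. Your dichotomy routes the ``concentrated'' consumers (those whose utility comes mostly from items their lottery allocates with probability below $1/(130n^3)$) to Step (3), conceding only an $\Omega(1/n)$ extraction, and then asserts that ``their total contribution is itself at most $O(\log n)\cdot r^*$ by the structure of optimal buy-many systems.'' That assertion is the crux and is not established by anything in your sketch; without it the argument yields only an $O(n)$ guarantee on that part of the revenue. The paper handles these consumers by a completely different mechanism: it proves (as a standalone proposition) that one may assume w.l.o.g.\ that $\sum_i \phi_i(v_i-v_j)\le p/4$ for the lottery each consumer buys, where $j$ is her favorite item among those with probability at least $1/(130n^3)$. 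The proof is an exchange argument against optimality: if consumers violating this carried more than half the revenue, one could restrict attention to a single offending item $k$, zero out all other coordinates of their lotteries, cut prices by $8n$, show via a marginal-utility calculation that each such consumer still pays within a factor $O(n^2)$ of before, and then rescale all (now tiny) probabilities and prices by $130n^3$ to beat the optimal system by a factor $2n$ --- a contradiction. So these consumers are simply discarded at a factor-$2$ cost; Step (3) is never used for them.

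Relatedly, you misread what Step (3) is for and how it avoids losing a factor $n$. In the paper, Step (3) serves consumers (already satisfying the w.l.o.g.\ condition) whose favorite high-probability item $i_\ell$ has value so far above its base price ($k>130n^3+2n$ copies would be bought in the copying argument) that the $3\lfloor\log n\rfloor+O(1)$ doublings of Step (2) cannot reach a commensurate price; the geometric grid $130n^3\ee^{j}p_{i_\ell}$ then has expected revenue about $(1-1/\ee)\,n\,k\,p_{i_\ell}$ \emph{conditioned} on selecting item $i_\ell$, i.e.\ the expectation over the geometric tail gains a factor $n$ that exactly cancels the $1/(2n)$ selection probability --- there is no net $1/n$ loss. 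Finally, your treatment of the ``spread'' consumers is also incomplete in a way you partly flag: at the single good scale $2^t$ the consumer need not buy $i^*$ at price $2^tp_{i^*}$, since all items are repriced simultaneously and she may prefer a cheaper one. The paper's fix is the monotone sequence $i_0,i_1,\ldots,i_\ell$ of items bought as $t$ increases, with the telescoping sum $\sum_j (v_{i_j}-v_{i_{j+1}})$ recovering $v_{i_0}$ (or $v_{i_0}-v_{i_\ell}$ plus the Step (3) term), against the upper bound $p\le 4\bigl(v_{i_0}-v_{i_\ell}+\min\{(k+130n^3)p_{i_\ell},\,v_{i_\ell}\}\bigr)$ derived from the copying argument. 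Your proposal would need both the exchange-argument proposition and this telescoping/case analysis to close; as written, neither the bound on the concentrated consumers' contribution nor the per-scale revenue accounting goes through.
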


It is known that the unit-demand item pricing problem cannot be approximated within $\bigo (n^{\varepsilon})$ for some $\varepsilon >0$. Thus, we immediately obtain the following inapproximability result for lottery pricing in the buy-many model.

\begin{corollary}
\label{t:inapprox}
The unit-demand lottery pricing problem cannot be approximated within $\bigo (n^{\varepsilon})$ for some $\varepsilon >0$, unless NP $\subseteq \bigcap _{\delta >0}$ BPTIME$(2^{\bigo (n^{\delta})})$.
\end{corollary}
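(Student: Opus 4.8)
The plan is to obtain this statement as a direct corollary of Theorem~\ref{t:logUpper} together with Briest's inapproximability result for unit-demand item pricing; there is no new construction to build, and the identity map on instances will serve as the reduction.

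First I would record the quantitative consequence of Theorem~\ref{t:logUpper}. Its algorithm always outputs an item pricing, and an item pricing is itself a (degenerate) buy-many lottery system in which a consumer never gains by buying more than one single-item deterministic lottery, so for every consumer distribution $\mathcal{C}$ we get the sandwich
\[
r^*(\mathcal{C}) \;\le\; r^*_L(\mathcal{C}) \;\le\; c\,\log n \cdot r^*(\mathcal{C})
\]
for an absolute constant $c$, where $r^*_L$ denotes the optimal buy-many lottery revenue. (The lower bound on $r^*(\mathcal{C})$ holds because the expected revenue of the algorithm's output is at least $r^*_L(\mathcal{C})/\bigo(\log n)$, so some setting of its coins attains a pricing of at least this revenue.) Thus in the buy-many model the optimal lottery revenue is pinned to within an $\bigo(\log n)$ factor of the optimal item-pricing revenue.

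Next I would invoke Briest's result in its gap form: under the stated complexity assumption there is a fixed $\varepsilon_0>0$ such that it is hard — via a randomized reduction, which is the source of the BPTIME conclusion — to decide, given $\mathcal{C}$ and a threshold $V$, whether $r^*(\mathcal{C})\ge V$ or $r^*(\mathcal{C})< V/n^{\varepsilon_0}$. Reading the very same instance $\mathcal{C}$ as a buy-many lottery instance on the same $n$ item types, the sandwich gives $r^*_L(\mathcal{C})\ge r^*(\mathcal{C})\ge V$ in the first case and $r^*_L(\mathcal{C})\le c\log n\cdot r^*(\mathcal{C})< c\log n\cdot V/n^{\varepsilon_0} \le V/n^{\varepsilon_0/2}$ in the second, once $n$ is large enough that $c\log n\le n^{\varepsilon_0/2}$ (the finitely many small $n$ being irrelevant to an asymptotic statement). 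Hence the identity reduction $\mathcal{C}\mapsto\mathcal{C}$ carries Briest's gap instances to instances on which it is hard to decide whether $r^*_L(\mathcal{C})\ge V$ or $r^*_L(\mathcal{C})< V/n^{\varepsilon_0/2}$; equivalently, the unit-demand lottery pricing problem in the buy-many model has no polynomial-time $\bigo(n^{\varepsilon})$-approximation for $\varepsilon=\varepsilon_0/2$ unless NP $\subseteq \bigcap_{\delta>0}$ BPTIME$(2^{\bigo(n^{\delta})})$, which is the claim.

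I do not expect any genuine obstacle here — all the difficulty is already absorbed into Theorem~\ref{t:logUpper}. The points that need care are purely bookkeeping: checking that the reduction is literally the identity on instances, so that it preserves both polynomial-time computability and the parameter $n$ (the number of item types, which is what the approximation ratio is measured against), and verifying that the $\bigo(\log n)$ slack from the sandwich is swallowed by the polynomial loss simply by passing from exponent $\varepsilon_0$ to $\varepsilon_0/2$. One could alternatively phrase the argument at the level of algorithms that output an actual pricing rather than at the level of gap instances, but the value/gap formulation above is the shortest route and uses nothing beyond Theorem~\ref{t:logUpper} and Briest's theorem.
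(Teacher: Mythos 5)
Your proposal is correct and follows the paper's own route exactly: the paper also obtains the corollary immediately by combining the $\bigo(\log n)$ sandwich between $r^*_L(\mathcal{C})$ and $r^*(\mathcal{C})$ from Theorem~\ref{t:logUpper} with Briest's $\bigo(n^{\varepsilon})$-inapproximability of unit-demand item pricing, the logarithmic slack being absorbed into the polynomial factor. Your write-up merely makes explicit (via the identity reduction and the gap formulation) what the paper states in one line.
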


{\bf Proof of Theorem \ref{t:logUpper}:}
We will show that in going from the optimal lottery system to a pure item pricing, the expected loss in revenue is bounded by $\mathcal{O}(\log n)$ for every single consumer type. So let a single consumer type from $\mathcal{C}$ with values $(v_1,\ldots ,v_n)$ be given. Furthermore, assume that this consumer buys a lottery with probabilities $(\phi _1,\ldots ,\phi _n)$ and price $p$ when offered the optimal lottery system in the buy-many model. By $j=\argmax \{ v_i\, |\, i\, :\, \phi _i \ge 1/(130n^3)\}$ we denote the consumer's favorite item among those for which he has at least a $1/(130n^3)$-chance of receiving them.

Finally, recall that for every item $i$, $p_i$ denotes the price of the cheapest lottery that has probability at least $1/(130n^3)$ for item $i$. We start with the observation that in any optimal lottery system a consumer's utility does not depend significantly on items she receives with negligible probability.

\begin{proposition}
\label{prop:logUpper}
We may w.l.o.g. assume that $\sum_{i=1}^{n}{\phi _i(v_i-v_j)} \le p/4$.
\end{proposition}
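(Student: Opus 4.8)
The plan is to combine a short pigeonhole reduction with an optimality argument for the revenue of the whole lottery system. Write $q_0 = 1/(130 n^3)$ for the threshold appearing in step~(1) of the algorithm. First I would observe that $\sum_{i=1}^n \phi_i (v_i - v_j)$ receives only nonpositive contributions from items $i$ with $\phi_i \ge q_0$ — this is exactly the defining property of $j$ as the highest-valued such item — and a zero contribution from the dummy item, so $\sum_{i=1}^n \phi_i(v_i - v_j) \le \sum_{i\,:\,0<\phi_i<q_0} \phi_i v_i$, and it suffices to bound the right-hand side by $p/4$. Suppose it exceeds $p/4$. Since fewer than $n$ items are involved, some item $i^*$ has $\phi_{i^*} < q_0$ yet $\phi_{i^*} v_{i^*} > p/(4n)$, hence $v_{i^*} > 32 n^2 p$: the consumer places very high value on an item she receives with negligible probability. (If instead the highest-valued item that $\lambda$ allocates with positive probability is itself one of the ``significant'' items, then $v_j$ already equals this maximum value, so $\sum_i \phi_i(v_i-v_j)\le 0$ and there is nothing to prove.)

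The heart of the matter is that an \emph{optimal} buy-many lottery system cannot contain such a ``lucky long-shot'' as the lottery a consumer actually buys: the seller does strictly better by selling $i^*$ to $\ct$ essentially outright. Concretely, I would add to the system the deterministic lottery $\nu$ that allocates $i^*$ with probability $1$ at a price $r$ just below $v_{i^*} - u(\ct,\lambda) = v_{i^*} - \sum_i \phi_i v_i + p$. Since $\sum_i \phi_i v_i \le v_{i^*}$ with strict inequality whenever not all items of positive probability share the value $v_{i^*}$ (which holds in our case, as $v_{i^*} > v_j$), we get $r > p$; and $u(\ct,\nu) > u(\ct,\lambda)$, so $\ct$ switches to $\nu$ and now pays more than $p$. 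Thus the modified system extracts strictly more from $\ct$, and if its revenue from every other consumer type does not decrease this contradicts optimality of the original system — giving the desired bound (indeed, modulo externalities, this argument would even force $\sum_i \phi_i(v_i-v_j) \le 0$; the slack $p/4$ in the statement is what pays for the externalities).

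The step I expect to be the main obstacle is precisely that proviso: controlling the effect of the new lottery $\nu$ on other consumers. A consumer $\ct'$ who also values $i^*$ highly might migrate to $\nu$ and pay $r$, possibly less than she paid before. Ruling this out is exactly where the buy-many structure and the small threshold $q_0$ enter — in the buy-many model a consumer's effective payment for obtaining item $i^*$ with high probability is already bounded (by replicating, up to an exponentially small error, any lottery that allocates $i^*$ with probability at least $q_0$), so one can either choose $r$ large enough that no consumer tempted by $\nu$ was paying less than $r$, or perform the replacement for all items simultaneously and bound the net change in revenue by a global/amortized argument; either route is where the constant $1/4$ gets pinned down. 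Everything else — the arithmetic with $130 n^3$ and checking that $\ct$ indeed switches — is routine. I would also record the weaker, purely buy-many observation that drives the intuition: buying a second independent copy of $\lambda$ costs an extra $p$ and raises the consumer's expected value by $\expect[(v_{i_2} - v_{i_1})^+]$ for i.i.d.\ draws $i_1,i_2$ from $\phi$, so optimality of her single-lottery choice already forces $\expect[(v_{i_2}-v_{i_1})^+] \le p$ and hence $\sum_i \phi_i(v_i - v_j) = \bigo(p)$; sharpening $\bigo(p)$ to $p/4$ is what requires the revenue-optimality of the system itself.
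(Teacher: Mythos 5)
Your reduction to a ``long-shot'' item (some $i^*$ with $\phi_{i^*}<1/(130n^3)$ and $\phi_{i^*}v_{i^*}>p/(4n)$) is fine, and you are right that the proof must play a modified lottery system off against the revenue-optimality of the given one. But the step you yourself flag as ``the main obstacle'' --- controlling what the new deterministic lottery $\nu$ does to the revenue from \emph{other} consumers, equivalently where the constant $1/4$ comes from --- is the entire content of the proposition, and your proposal leaves it unexecuted: you offer two unverified escape routes (``choose $r$ large enough'' or ``a global/amortized argument'') without carrying out either. The first is in tension with your own construction, since $r$ must sit just below $v_{i^*}-u(\ct,\lambda)$ for the particular consumer $\ct$, while nothing bounds how much revenue the consumers who would migrate to $\nu$ were contributing before. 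Note also that the statement is a ``w.l.o.g.'' claim, and what actually has to be shown (and is all the paper shows) is that consumers violating the inequality account for a bounded fraction of the optimal revenue and can be discarded; your plan aims at a per-consumer inequality (``an optimal system cannot contain such a lucky long-shot, modulo externalities''), which is a stronger statement that the paper does not establish and which your argument does not substantiate precisely because the externalities are unbounded as written.

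For contrast, the paper never adds a lottery to the optimal system. Let $\mathcal{C}'$ be the violators; by pigeonhole each has an item $k$ with $\phi_k(v_k-v_j)>p/(4n)$ (necessarily $\phi_k<1/(130n^3)$), yielding classes $\mathcal{C}'_k$. If $\mathcal{C}'$ carried more than half the optimal revenue, some $\mathcal{C}'_k$ would carry more than a $1/(2n)$ fraction. For that one class the paper builds a \emph{fresh} system out of the lotteries these consumers actually buy, zeroing all coordinates except $k$ and dividing prices by $8n$; a buy-many marginal-utility argument (the $\mu_k\ge\phi_k/2$ step you gesture at) shows each such consumer still pays about $p/(65n^2)$; and since every retained probability is below $1/(130n^3)$, all probabilities and prices can be rescaled by $130n^3$, multiplying the revenue from this class by $2n$ relative to the optimum --- a contradiction. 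Hence the violators contribute at most half the revenue, which is the sense in which the assumption is ``w.l.o.g.''. Finally, your fallback observation that replicating $\lambda$ forces $\sum_i\phi_i(v_i-v_j)\le(1+o(1))p$ is correct but not enough: the downstream argument in Theorem \ref{t:logUpper} needs the constant strictly below $1/2$ (it uses $v_j\ge(3/4)p$ to force a purchase at $t=-1$ when item $j$ costs at most $p/2$), and simple replication does not deliver $1/4$.
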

\begin{proof}
Let $\mathcal{C}'$ be the set of all consumer types for which the above does not hold. In particular, for each consumer type in $\mathcal{C}'$, if we let $j$ again denote her favorite item with probability at least $1/(130n^3)$ in the lottery $(\phi _1,\ldots ,\phi _n)$ she buys at price $p$, there exists an item $k$ with $\phi _k(v_k-v_j)>p/(4n)$. Let $\mathcal{C}'_k$ be the set of all consumer types for which item $k$ satisfies this inequality and consider a single class $\mathcal{C}'_k$.

Remove all lotteries except for the ones bought by consumers in $\mathcal{C}'_k$. For the remaining lotteries, set their probabilities for all items other than $k$ to $0$ and reduce their prices by a factor of  $8n$. Now consider a single consumer type in $\mathcal{C}'_k$ with values $(v_1,\ldots ,v_n)$ buying lottery $(\phi _1,\ldots ,\phi _n)$ at price $p$ in the original lottery system and favorite item $j$ among those with minimum probability $1/(130n^3)$. We want to lower bound the revenue from this consumer given the modified system of lotteries.

If the consumer purchases the modified version of the lottery she bought originally, revenue has decreased by at most a factor of $8n$. If she does not, she now buys some other lottery (or a combination of lotteries) with some probability $\mu _k$ for item $k$ and price $q/8n$. With lottery $(\phi _1,\ldots ,\phi _n)$, the consumer had a chance of at least $1-(n-1)/(130n^3)\ge 1-1/(130n^2)$ of receiving an item valued at $v_j$ or less. Thus, the marginal utility of adding a copy of $(\mu _1,\ldots ,\mu _n)$ at its original price $q$ would have been $\mu _k(1-1/(130n^2))(v_k-v_j)-q$. Since the consumer chooses not to buy a copy, we have\[
\mu _k\left( 1-\frac{1}{130n^2}\right)(v_k-v_j)-q\le 0.\]
We know that $\phi _kv_k\ge \phi _k(v_k-v_j)\ge p/(4n)$ and, thus, buying the modified version of lottery $(\phi _1,\ldots ,\phi _n)$ at price $p/(8n)$ yields utility $\phi _kv_k-p/(8n)\ge \frac{1}{2}\phi _kv_k$. Thus, it must be the case that $\mu _k\ge \phi _k/2$. Finally, we obtain
\begin{eqnarray*}
q & \ge & \mu _k\left( 1-\frac{1}{130n^2}\right)(v_k-v_j) \ge \frac{\phi _k}{2}\left( 1-\frac{1}{130n^2}\right)(v_k-v_j)\\
 & \ge & \frac{1}{8n}\left( 1-\frac{1}{130n^2}\right)p.
\end{eqnarray*}
Thus, if the consumer purchases at price $q/8n$ the reduction in revenue is bounded below by $1/(64n^2)(1-1/(130n^2))\ge 1/(65n^2)$. Now observe that all lotteries in our modified lottery system have probability at most $1/(130n^3)$, so we can multiply probabilities and prices of all lotteries by $130n^3$ without affecting any consumer's buying decision. This effectively increases the revenue from every consumer in $\mathcal{C}'_k$ by a factor of $(130n^3)/(65n^2)=2n$ compared to the original optimal lottery system.

Now assume that more than half the revenue of the original optimal lottery system was due to consumer types in $\mathcal{C}'$. Then there must exist a class $\mathcal{C}'_k$ that carries more than a $1/(2n)$-fraction of the overall revenue, which we have just shown how to increase by a factor of $2n$, a contradiction. Hence, at most half the revenue stems from consumer types in $\mathcal{C}$ and we may ignore these consumer types.
\end{proof}

Consider the price assignments defined in Step (2) of the algorithm for different values of $t\in \{ -1,0,\ldots ,3\lfloor \log n\rfloor +9\}$. We know that $v_j=\sum _{i=1}^n\phi _iv_i-\sum _{i=1}^n\phi _i(v_i-v_j)\ge p-p/4=(3/4)p$ and, thus, for $t=-1$ the consumer has strictly positive utility from buying $j$ and will consequently purchase some item. Denote by $i_0$ the item bought for $t=-1$. For increasing values of $t$, the consumer might switch to other items that yield higher utility. Refer to these items as $i_1,\ldots ,i_{\ell}$ in the order they are bought.

\begin{fact}
\label{lem:monotonicity}
It holds that $p_{i_0}>p_{i_1}>\cdots >p_{i_{\ell}}$ and $v_{i_0}\ge v_{i_1}\ge \cdots \ge v_{i_{\ell}}$.
\end{fact}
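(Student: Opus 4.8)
The plan is to use the single-crossing structure of the price assignment in Step~(2). Fix the consumer under consideration; under prices $2^tp_i$ her utility for item $i$ is $v_i-2^tp_i$, a non-increasing function of $t$ (strictly decreasing when $0<p_i<\infty$, constant when $p_i=0$, identically $-\infty$ when $p_i=\infty$). Hence for any two items $a,b$ the utility difference
\[
g_{ab}(t)\;:=\;(v_a-2^tp_a)-(v_b-2^tp_b)\;=\;(v_a-v_b)-2^t(p_a-p_b)
\]
is monotone in $t$: strictly decreasing if $p_a>p_b$, strictly increasing if $p_a<p_b$, constant if $p_a=p_b$. Two structural consequences of this drive the proof. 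First, once $g_{ab}(t)$ becomes negative it stays negative, so an item the consumer abandons as $t$ grows is never purchased again; and since the utility of her best option is a pointwise maximum of non-increasing functions, once she prefers to buy nothing she buys nothing for all larger $t$. Together these say the purchased items $i_0,i_1,\dots,i_\ell$ occupy consecutive, non-empty intervals of the $t$-range with no ``buy nothing'' gaps between $i_0$ and $i_\ell$; in particular, for each $m$ there is a value $s$ with $i_m$ purchased at $t=s$ and $i_{m+1}$ purchased at $t=s+1$ (fix any consistent deterministic tie-breaking rule so that ``the item purchased at $t$'' is well defined).

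With $a=i_m$ and $b=i_{m+1}$, the two optimality conditions at the switch read $g_{ab}(s)\ge 0$ (item $a$ is a utility-maximizing purchase at $t=s$) and $g_{ab}(s+1)\le 0$ (item $b$ is one at $t=s+1$). I would then rule out $p_a\le p_b$. If $p_a<p_b$, then $g_{ab}$ is strictly increasing, so $g_{ab}(s+1)>g_{ab}(s)\ge 0$, contradicting $g_{ab}(s+1)\le 0$; this also disposes of the impossible case $p_b=+\infty$, since then $b$ would never be purchased. If $p_a=p_b$, then $g_{ab}\equiv v_a-v_b$ is constant, and the two inequalities force $v_a=v_b$, hence $a$ and $b$ tie at every $t$ --- impossible if a fixed tie-breaking rule selected different items at $t=s$ and $t=s+1$. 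So $p_a>p_b$ (and $p_a=p_{i_m}<\infty$ because $i_m$ is actually purchased), and chaining over $m$ gives $p_{i_0}>p_{i_1}>\cdots>p_{i_\ell}$.

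The valuation ordering is then immediate from $g_{ab}(s)\ge 0$: this says $v_a-v_b\ge 2^s(p_a-p_b)$, and since we have just shown $p_a-p_b>0$ and $2^s>0$, we get $v_a>v_b$; chaining gives $v_{i_0}\ge v_{i_1}\ge\cdots\ge v_{i_\ell}$ (in fact with strict inequalities throughout).

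I expect the only delicate point to be the bookkeeping in the first paragraph --- fixing a consistent tie-break, checking that consecutive entries of the list are purchased at consecutive values of $t$, that items are never re-visited, and that there is no interleaved ``buy nothing'' phase --- all of which follow from the monotonicity of each $g_{ab}$ and of the best-option utility, but need to be said carefully. After that, the core of the argument is the two-line single-crossing computation on $g_{ab}$.
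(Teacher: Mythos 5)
Your proof is correct and takes essentially the same route as the paper's: the paper likewise observes that going from $t$ to $t+1$ lowers utility strictly less for cheaper items (your monotone $g_{ab}$), which forces switches to go to strictly cheaper items, and then gets $v_{i_j}\ge v_{i_{j+1}}$ from optimality at the switch (phrased there as: a cheaper, higher-valued item would dominate at every $t$, so $i_j$ would never be bought). You are merely more explicit about tie-breaking and the no-return/no-gap bookkeeping, which the paper leaves implicit; the only small imprecision is that ``once $g_{ab}$ becomes negative it stays negative'' literally holds only when $p_a\ge p_b$, but your case analysis at the switch point disposes of the other cases anyway.
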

\begin{proof}
Note that the consumer buys only if this results in non-negative utility. In going from $t$ to $t+1$ the utility from buying any item $i$ decreases by $(2^{t+1}-2^t)p_i=2^tp_i$ and, thus, decreases strictly less on cheaper items. It follows that $p_{i_0}>p_{i_1}>\cdots >p_{i_{\ell}}$.

Assume then that $p_{i_j}>p_{i_{j+1}}$, but $v_{i_j}<v_{i_{j+1}}$ for some $i_j$, $i_{j+1}$. Then buying item $i_{j+1}$ yields strictly higher utility than $i_j$ for any value of $t$ and $i_j$ is never bought.
\end{proof}

We proceed by deriving a lower bound on the consumer's utility given the original lottery system. For $t=-1$, her utility from the item pricing is at least $v_j-p_j/2\ge v_j-p/2$, since $\phi _j\ge 1/(130n^3)$ and, thus, $p_j\le p$ by definition. Since she decides to purchase item $i_0$, it must be the case that $v_{i_0}\ge v_j-p/2$. We may then write that
\begin{eqnarray}
\sum _{i=1}^n\phi _iv_i -p & = & \sum _{i=1}^n\phi _i(v_i-v_j)+v_j -p\le v_j-\frac{3}{4}p \quad \quad \mbox{ , since } \sum _{i=1}^n\phi _i(v_i-v_j)\le \frac{p}{4}\\
\label{eqn1} & \le & v_{i_0}-\frac{p}{4} \quad \quad \quad \quad \quad \quad \quad \quad \quad \quad \quad \quad \quad \, \, \, \mbox{ , since } v_j\le v_{i_0}+\frac{p}{2}.
\end{eqnarray}

Next, we are going to bound the utility the consumer can achieve by focusing on item $i_{\ell}$ from below. Let $(\mu _1,\ldots ,\mu _n)$ be the lottery defining price $p_{i_{\ell}}$, i.e., the cheapest lottery in the original system with $\mu _{i_{\ell}}\ge 1/(130n^3)$. Now consider a strictly worse lottery with probability exactly $1/(130n^3)$ for item $i_{\ell}$, probability $0$ for all other items and price $p_{i_{\ell}}$. Assume that this was the only available lottery and let $k$ denote the number of copies our consumer would choose to purchase. Then, by the fact that the $(k+1)$-th copy does not yield positive marginal utility for her, we may conclude that $1/(130n^3)( 1-1/(130n^3)) ^kv_{i_{\ell}}-p_{i_{\ell}}\le 0$ and, thus, $(1-1/(130n^3))^kv_{i_{\ell}}\le 130n^3p_{i_{\ell}}$. Consequently, the utility from buying $k$ copies is at least
\begin{eqnarray}
\label{eqn2} \left( 1-\Bigl( 1-\frac{1}{130n^3}\Bigr) ^k\right) v_{i_{\ell}}-kp_{i_{\ell}} & \ge & \max \Bigl\{ v_{i_{\ell}}-(k+130n^3)p_{i_{\ell}}\, ,\, 0 \Bigr\} .
\end{eqnarray}
By the fact that a strictly better lottery was part of the original lottery system, combining (\ref{eqn1}) and (\ref{eqn2}) yields\[
v_{i_0}-\frac{p}{4}\ge \sum _{i=1}^n\phi _iv_i -p \ge \max \Bigl\{ v_{i_{\ell}}-(k+130n^3)p_{i_{\ell}}\, ,\, 0 \Bigr\} .\]
Finally, rearranging for $p$ we obtain\[
p\le 4\left( v_{i_0}-v_{i_{\ell}}+\min \Bigl\{ (k+130n^3)p_{i_{\ell}}\, , \, v_{i_{\ell}}\Bigr\}\right)\]
as an upper bound on the price paid by the consumer given the original lottery system.

We proceed by proving a lower bound on the expected price paid given the item pricing returned by our algorithm. We distinguish the following three cases.

{\bf Case (1):} $v_{i_{\ell}}\le (k+130n^3)p_{i_{\ell}}$. Let $t_j$ be the highest value of $t$ in Step (2) of the algorithm at which the consumer purchases item $i_j$. Observe that as long as item $i_j$ is priced at $v_{i_j}-v_{i_{j+1}}$ or less it yields higher utility than item $i_{j+1}$. It follows that\[
2^{t_j+1}p_{i_j}\ge v_{i_j}-v_{i_{j+1}} \mbox{\hspace{3mm} or, equivalently, \hspace{3mm}} 2^{t_j}p_{i_j}\ge (1/2)(v_{i_j}-v_{i_{j+1}}).\]
Using that $v_{i_{\ell}}\le (k+130n^3)p_{i_{\ell}}$ and $v_{i_{\ell}}>130n^3(1-1/(130n^3))^{-(k-1)}p_{i_{\ell}}$ (by the fact that the $k$-th copy of the lottery with probability $1/(130n^3)$ for item $i_{\ell}$ at price $p_{i_{\ell}}$ has positive marginal utility), it readily follows that $v_{i_{\ell}}\le 260n^3p_{i_{\ell}}$. In particular, this implies that the range of $t$ in Step (2) of the algorithm includes $\lfloor \log v_{i_{\ell}}\rfloor$. Let now $R$ denote the the price paid by the consumer given the item pricing and note that Step (2) of the algorithm is performed with probability $1/2$. We have
\begin{eqnarray*}
\expect [R] & \ge & \frac{1}{2}\left( \sum _{j=0}^{\ell -1}\prob (t=t_j)\frac{1}{2}(v_{i_j}-v_{i_{j+1}}) + \prob (t=\lfloor \log v_{i_{\ell}}\rfloor )\frac{v_{i_{\ell}}}{2} \right)\\
 & \ge & \frac{1}{12\lfloor \log n\rfloor +44}\left( \sum _{j=0}^{\ell -1}(v_{i_j}-v_{i_{j+1}})+v_{i_{\ell}}\right) = \frac{1}{12\lfloor \log n\rfloor +44}v_{i_0}.
\end{eqnarray*}

{\bf Case (2):} $v_{i_{\ell}}>(k+130n^3)p_{i_{\ell}}$, $k\le 130n^3+2n$. In this case we know that $\lfloor \log k+130n^3\rfloor$ lies within the range of $t$ in Step (2) of the algorithm. Similar to Case (1) above we obtain\[
\expect [R]\ge \frac{1}{12\lfloor \log n\rfloor +44}\Bigl(v_{i_0}-v_{i_{\ell}}+(k+130n^3)p_{i_{\ell}}\Bigr) .\]

{\bf Case (3):} $v_{i_{\ell}}>(k+130n^3)p_{i_{\ell}}$, $k> 130n^3+2n$. Once more, recall that our consumer has positive marginal utility from buying the $k$-th copy of a lottery that offers a $1/(130n^3)$-chance of receiving item $i_{\ell}$ at price $p_{i_{\ell}}$. Thus,\[
v_{i_{\ell}}\ge 130n^3\Bigl( 1-\frac{1}{130n^3}\Bigr) ^{-(k-1)}p_{i_{\ell}}\approx 130n^3\ee ^{k-130n^3-1}p_{i_{\ell}},\]
where the above holds with arbitrary precision for large values of $n$. Let $A_{i_{\ell}}$ denote the event that the algorithm chooses to perform the random experiment in Step (3) and picks item $i_{\ell}$ to assign a price different from $+\infty$ to. For the expected payment of our consumer conditioned on $A_{i_{\ell}}$ we have
\begin{eqnarray*}
\expect [R\, |\, A_{i_{\ell}}] & \ge & \sum _{j=0}^{k-130n^3-1}\Bigl( 1-\frac{1}{e} \Bigr)e^{-j}130n^3e^jp_{i_{\ell}} = (k-130n^3)\Bigl( 1-\frac{1}{e} \Bigr) 130n^3p_{i_{\ell}}\\
 & = & (nk+(130n^3-n)k-130^2n^6)\Bigl( 1-\frac{1}{e} \Bigr) p_{i_{\ell}}\ge \Bigl( 1-\frac{1}{e} \Bigr) nkp_{i_{\ell}}.
\end{eqnarray*}
Finally, since event $A_{i_{\ell}}$ has probability $1/(2n)$, we obtain
\begin{eqnarray*}
\expect [R] & \ge & \frac{1}{2}\left( \sum _{j=0}^{\ell -1}\prob (t=t_j)\frac{1}{2}(v_{i_j}-v_{i_{j+1}})\right) + \frac{1}{2n}\expect [R\, |\, A_{i_{\ell}}]\\
 & \ge & \frac{1}{12\lfloor \log n\rfloor +44}(v_{i_0}-v_{i_{\ell}})+\frac{\ee -1}{2\ee}kp_{i_{\ell}} \ge \frac{1}{12\lfloor \log n\rfloor +44}\Bigl( v_{i_0}-v_{i_{\ell}}+(k+130n^3)p_{i_{\ell}}\Bigr) .
\end{eqnarray*}
So we have that $\expect [R]= 1/\mathcal{O}(\log n)\cdot p$ in each case, which finishes the proof. \hfill$\square$

\subsection{A Lower Bound}
\label{logLower}

Finally, we will show that the bound derived in the previous section is tight. In fact, it turns out that this is even true for the restricted case of consumers with uniform valuations. This is somewhat surprising, as this distinction is quite significant in the buy-one model, as we have seen before.

\begin{theorem}
\label{t:uniformIndGap}
For all $n$ there exist uniform-valuation consumer distributions $\mathcal{C}$ with $r^*_L(\mathcal{C})/r^*(\mathcal{C})=\Omega (\log n)$.
\end{theorem}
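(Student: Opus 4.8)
The plan is to exhibit, for each $n$, a uniform-valuation distribution $\cd$ together with (i) a \emph{single} suitably priced lottery whose buy-many revenue is $\Omega(\log n)$ and (ii) a proof that \emph{no} item pricing earns more than a constant; combining these gives $r^*_L(\cd)/r^*(\cd) = \Omega(\log n)$.

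I would set up $\cd$ with $L=\Theta(\log n)$ geometrically spaced value levels: at level $\ell$ ($1\le\ell\le L$) all consumers have value $2^\ell$, and the total probability mass at level $\ell$ is $2^{-\ell}$, so each level contributes exactly $1$ to the total surplus $\sum_j\mu_jv_j=\Theta(L)=\Theta(\log n)$ while the mass of any individual level is tiny. Every level-$\ell$ consumer is interested in a set of exactly $n/2^\ell$ items, but at each level we use a large \emph{family} of such sets, not a single one; these families are drawn from the incidence geometry of degree-$2$ curves over a finite field $\mathbb{F}_q$ (with $n$ polynomial in $q$), chosen so that any two consumer sets meet in only a controlled number of items. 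Having all level-$\ell$ sets of equal size is what lets a single uniform lottery serve every level simultaneously; having many of them, glued together by rigid curve incidences rather than in a nested or pseudorandom pattern, is what will defeat item pricing.

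For the lower bound on $r^*_L(\cd)$ I would offer the lone lottery $\lambda^\star=(\phi^\star,p)$ with $\phi^\star$ uniform over the $n$ items and $p$ a fixed constant (say $p=\tfrac12$). A level-$\ell$ consumer with set $S$ has $\phi^\star(S)=|S|/n=2^{-\ell}$, so one copy already yields utility $2^\ell\cdot2^{-\ell}-p=1-p>0$; since in the buy-many model she may buy independent copies and keep her best sample, the marginal-utility computation (the $k$-th copy has marginal utility $2^\ell(1-2^{-\ell})^{k-1}2^{-\ell}-p$) shows she buys $k_\ell=\Theta(2^\ell)$ copies and therefore pays $k_\ell p=\Theta(2^\ell)$, a constant fraction of her value. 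Because the system contains a single lottery there is no need to reason about which \emph{bundle} is optimal, only about how many copies to buy, which keeps this direction clean. Summing over levels, $\profit^{\mathrm{BM}}(\{\lambda^\star\})=\sum_\ell(\text{mass at level }\ell)\cdot\Theta(2^\ell)=\Theta(L)=\Omega(\log n)$.

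The substance of the proof, and the step I expect to be the main obstacle, is the upper bound $r^*(\cd)=\bigo(1)$. Given item prices $p_i$, a level-$\ell$ consumer with set $S$ pays $\min_{i\in S}p_i$ when this is at most $2^\ell$ and nothing otherwise, so the seller's task is to make the minima of many sets, across many levels, be small enough; one checks that level $\ell$ contributes $\Omega(1)$ to the revenue only if a constant fraction (by mass) of level-$\ell$ sets have their cheapest item priced $\approx 2^\ell$ and no item priced much lower. The role of the finite-field construction is to block the two obvious strategies: a \emph{nested} family of sets would let the seller assign one well-tuned price per level (revenue $\Theta(\log n)$), and a \emph{disjoint} or pseudorandom family would decouple the levels so that the ``price item type $i$ at $\approx 2^{\ell(i)}$'' strategy simultaneously captures every level (again $\Theta(\log n)$). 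With the rigid curve overlaps, by contrast, any price assignment that makes a non-negligible fraction of level-$\ell$ sets cheap at price $\approx 2^\ell$ is forced, through the intersection pattern, to place cheap items inside sets belonging to other levels, whose consumers then pay far below their value; a counting argument over the curves turns this tension into a geometric series that bounds the total item-pricing revenue by a constant independent of $n$. Formalizing this incidence-theoretic argument — identifying exactly which curve family makes ``hitting the level-$\ell$ price class'' incompatible with ``avoiding the cheaper price classes'' for $\Omega(\log n)$ levels at once, and carrying out the accompanying counting — is where essentially all the work lies. Once both bounds are in hand, $r^*_L(\cd)/r^*(\cd)=\Omega(\log n)$, and handling arbitrary $n$ (rather than prime-power-related values) is the routine matter of taking $q$ near $n$ and padding with items valued at $0$ by everyone.
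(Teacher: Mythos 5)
Your lottery-side argument is fine and even elegant: by making level-$\ell$ consumers value $2^\ell$ and want sets of size $n/2^\ell$, a single uniform lottery at constant price has $\phi^\star(S)\cdot v = 1$ for every consumer, the copies-to-buy computation is clean, and each level contributes $\Theta(1)$, giving $\Omega(\log n)$ lottery revenue. But the theorem lives or dies on the other half, $r^*(\cd)=\bigo(1)$, and that is exactly the part you have not proved -- you say yourself that ``formalizing this incidence-theoretic argument \ldots is where essentially all the work lies.'' No specific curve family is identified, the number of sets per level is never fixed, and no counting argument is given showing that a pricing which captures level $\ell$ must contaminate a constant fraction of every higher level. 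With \emph{deterministic} level assignments the seller knows the level of every set, and it is far from clear that any explicit intersection pattern rules out layered pricings that capture $\Omega(\log n)$ levels simultaneously; you offer only the heuristic that nested and disjoint families fail, which does not certify that the curve family succeeds.

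It is worth noting that the paper sidesteps precisely this difficulty by \emph{randomizing} the values: there are $n^3$ classes $\mathcal{C}_P$ (one per degree-$2$ polynomial over $\Z/n\Z$, items being the $n^2$ points), each class gets a uniformly random level among $k=\lfloor\log n\rfloor$ choices with mass chosen so its full-value contribution is $1$, and then for any fixed price vector only about a $1/k$ fraction of classes can be matched at their value level; a Chernoff bound plus a union bound over the $k^{n^2}$ relevant price vectors (which works because there are $n^3\gg n^2$ classes) yields $r^*(\cd)=\bigo(n^3/\log n)$ with positive probability. The curve geometry ($|S_P\cap S_Q|\le 2$) is \emph{not} used to defeat item pricing at all; it is used only on the lottery side, to show that in the buy-many model a consumer cannot profitably substitute many copies of other classes' cheap lotteries for her own. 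So your proposal inverts where the finite-field structure does its work, and the deterministic, incidence-based item-pricing bound you rely on is an unsubstantiated claim rather than a step you can currently carry out; as written, the proof has a genuine gap. If you want to keep your single-lottery construction, the natural fix is to import the paper's mechanism: assign the level of each set at random and prove the item-pricing bound by concentration plus a union bound over price vectors, which requires making the number of sets per level large compared to the number of items.
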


{\bf Proof of Theorem \ref{t:uniformIndGap}:}
Let $n\in \N$ be prime and $\mathcal{P}$ denote the set of all distinct polynomials of degree $2$ over the field $\Z / n\Z$. We identify each polynomial $P\in \mathcal{P}$ with the set $S_P=\bigl\{ (0,P(0)),\ldots ,(n-1,P(n-1)) \bigr\} \subset \bigl( \Z /n\Z \bigr) ^2$. Observe that $|\mathcal{P}|=n^3$ and $|S_P|=n$, $|S_P\cap S_Q|\le 2$ for all $P\not= Q\in \mathcal{P}$ by the fact that polynomials of maximum degree $2$ over $\Z / n\Z$ for $n$ prime have at most $2$ zeroes.

We define a random pricing instance based on these polynomials as follows. The set of items corresponds to the elements of $( \Z /n\Z )^2$. Let $k=\lfloor \log n \rfloor$. For each $P\in \mathcal{P}$ we define a corresponding class $\mathcal{C}_P$ of $2^{k-j}$ identical consumers with a non-zero value $v_P=2^{j-k}$ for the items in $S_P$, where $j$ is drawn uniformly at random from $\{ 1,\ldots ,k\}$. Let $\mathcal{C}=\bigcup _P \mathcal{C}_P$ denote the complete instance.

The proof of the lower bound proceeds in two steps. We first argue that with non-zero probability our random experiment creates a pricing instance on which every pure item pricing yields revenue $\mathcal{O}(n^3/\log n)$. We then show that for every instance created by the experiment we can find a lottery system that yields revenue $\Omega (n^3)$.

\begin{lemma}
\label{l:uniformItemBound}
Let $\mathcal{C}$ be a random pricing instance as defined above. It holds that $r^*(\mathcal{C})=\mathcal{O}(n^3/\log n)$ with positive probability.
\end{lemma}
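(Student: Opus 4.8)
The plan is to reduce to finitely many ``relevant'' price vectors, bound the expected revenue of each one over the random choice of the exponents, prove concentration, and finish with a union bound.

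\emph{Step 1: discretization.} Every consumer value lies in the finite set $V=\{2^{1-k},2^{2-k},\dots,2^{0}\}$, so I would first argue that only finitely many price vectors matter. Given an arbitrary price vector $(p_i)$ on the $n^2$ items, round each coordinate \emph{up} to the smallest element of $V$ that is $\ge p_i$ (rounding to $2^{1-k}$ if $p_i<2^{1-k}$ and to $+\infty$ if $p_i>1$), obtaining $(p'_i)\in(V\cup\{+\infty\})^{n^2}$. Rounding up preserves the weak ordering of prices within each set $S_P$, and it never makes an affordable item unaffordable for a consumer of class $\mathcal C_P$, since her value $v_P$ is itself an element of $V$. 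Hence one checks that under $(p'_i)$ every consumer still buys (the cheapest item of $S_P$, possibly after a price tie) and pays at least what she paid under $(p_i)$; so $r^*(\mathcal C)$ equals the maximum revenue over the at most $(k+1)^{n^2}$ vectors in $(V\cup\{+\infty\})^{n^2}$.

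\emph{Step 2: per-class expectation.} Fix such a vector and write its revenue as $\sum_{P\in\mathcal P}X_P$, where $X_P$ is the revenue extracted from $\mathcal C_P$. A consumer of $\mathcal C_P$ buys the cheapest item of $S_P$ exactly when its price $m_P:=\min_{i\in S_P}p'_i$ satisfies $m_P\le v_P=2^{j_P-k}$, so $X_P=2^{\,k-j_P}m_P$ in that case and $0$ otherwise; in particular $X_P\le 2^{\,k-j_P}\cdot 2^{\,j_P-k}=1$ always, and the $X_P$ are independent since the exponents $j_P$ are. A short geometric-series computation over $j_P$ uniform in $\{1,\dots,k\}$ shows $\expect[X_P]\le 2/k$ for \emph{every} value of $m_P$: if $m_P>1$ the sum is empty; if $0<m_P\le1$ only exponents $j_P$ with $2^{\,j_P-k}\ge m_P$ contribute, and $\sum_{j_P:\,2^{j_P-k}\ge m_P}2^{\,k-j_P}\le 2/m_P$, which cancels the factor $m_P$. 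Therefore the expected revenue of any fixed relevant vector is at most $2n^3/k$.

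\emph{Step 3: concentration and union bound.} Since the revenue of a fixed relevant vector is a sum of $n^3$ independent $[0,1]$-random variables with mean at most $2n^3/k$, a Chernoff bound gives $\prob[\,\text{revenue}\ge 4e\,n^3/k\,]\le 2^{-4e\,n^3/k}$. Union-bounding over the $\le(k+1)^{n^2}$ relevant vectors, the probability that \emph{some} price vector extracts more than $4e\,n^3/k$ is at most $(k+1)^{n^2}\cdot 2^{-4e\,n^3/k}=2^{\,n^2\log_2(k+1)-4e\,n^3/k}$. Because $k=\lfloor\log n\rfloor$, the exponent equals $n^2\log_2\log_2 n-\Theta(n^3/\log n)$, which tends to $-\infty$; so the probability is below $1$ for all large $n$, and with positive probability every item pricing yields revenue $\bigo(n^3/\log n)$.

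\emph{Main obstacle.} The crux is getting the two tails to meet: discretization leaves $(k+1)^{n^2}=2^{\tilde\bigo(n^2)}$ candidate pricings, while Step 2 only buys a deviation probability of $2^{-\Theta(n^3/\log n)}$; the argument works precisely because $n^3/\log n$ dominates $n^2\log\log n$. The one estimate that needs care is the uniform-in-$m_P$ bound $\expect[X_P]\le 2/k$ — its worst case, where it is tight up to the constant, is $m_P$ just below a power $2^{\ell-k}$, so that only the exponents $j_P\ge\ell$ contribute and $\sum_{j\ge\ell}2^{\,k-j}=\Theta(2^{\,k-\ell})=\Theta(1/m_P)$. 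I also need to double-check the monotonicity claims in Step 1 (that rounding up cannot make a buyer switch to a cheaper effective purchase or drop out), which is a routine case analysis.
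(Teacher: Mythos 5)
Your proposal is correct, and it streamlines the paper's argument at two points while keeping the same overall skeleton (discretize prices to powers of two, exploit independence of the random exponents via a Chernoff bound, union-bound over the roughly $k^{n^2}$ discretized price vectors). The paper first loses a factor of $2$ by reducing to sales in which consumers pay their \emph{full} value (the ``double all prices'' argument), and then, for a fixed price vector, partitions the classes into level sets $\mathcal{B}_j$ according to the cheapest price level at which they have a valued item, applying a Chernoff bound to the Bernoulli indicators within each level and handling small levels trivially. You avoid both steps: your rounding-up observation uses the fact that all values lie in $V$ to show the discretization is lossless (the paper's version is lossy but only by a constant), and your uniform bound $\expect[X_P]\le 2/k$, obtained from the geometric series over exponents, lets you apply a single Chernoff bound per price vector to a sum of independent $[0,1]$-valued (not merely Bernoulli) variables, eliminating the level decomposition entirely. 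The only points needing care are exactly the ones you flag, plus two minor ones: you must invoke the Chernoff--Hoeffding bound in its form for independent $[0,1]$-valued variables with an upper bound on the mean (e.g.\ $\prob[X\ge R]\le 2^{-R}$ for $R\ge 2e\,\expect[X]$, which covers your $R=4e\,n^3/k$), and your union-bound exponent is only negative for $n$ sufficiently large, whereas the paper claims $n\ge 2$; since $r^*(\mathcal{C})\le n^3$ always, small $n$ are absorbed into the $\bigo(\cdot)$ constant, so this does not affect the lemma.
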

\begin{proof}
Let $p$ be a price vector resulting in revenue $r$ on some instance created by our random experiment. Then there must exist another price vector $p'$ that assigns only prices from $\{ 2^{1-k},2^{2-k},\ldots ,2^{0}\}$ and makes revenue at least $r/2$ by sales to consumers buying at a price equal to their value. To see this, note that it is w.l.o.g. to assume that prices are powers of $2$ and as long as more than half the revenue generated by a price vector $p$ comes from consumers buying at a price no more than half their values, we can increase revenue by doubling all prices. Thus, since the identical consumers in each class $\mathcal{C}_P$ contribute a total revenue of $1$ if they buy at their full values, we only need to prove that with positive probability no price vector extracts the full value of consumers in more than $\mathcal{O}(n^3/ \log n)$ different classes $\mathcal{C}_P$.

Consider a fixed price vector $p$ and let $T_j$ denote the set of items priced at $2^{j-k}$ for $j=1,\ldots ,k$. By $\mathcal{B}_j$ we denote the set of consumer classes with a non-zero value for at least one item in $T_j$ and value zero for all items in $T_1,\ldots ,T_{j-1}$, formally,\[
\mathcal{B}_j=\left\{ \mathcal{C}_P\, |\, S_P\cap T_j \not= \emptyset \right\} \backslash \Bigl( \bigcup _{i=1}^{j-1}\mathcal{B}_i\Bigr).\]
A set $\mathcal{C}_P\in \mathcal{B}_j$ of consumers yields revenue $1$ if and only if their random value is $v_P=2^{j-k}$. Let random variable $Y_j$ denote the number of consumer classes in $\mathcal{B}_j$ that pay their full values given price vector $p$. We will bound $Y_j$ in two steps. If $|\mathcal{B}_j|\le n^3/(\log n)^2$, then it trivially holds that $Y_j\le |\mathcal{B}_j|\le n^3/(\log n)^2$. If $|\mathcal{B}_j|> n^3/(\log n)^2$, let $X_P^j$ be a random indicator variable with $X_P^j=1$ if $v_P=2^{j-k}$ and $X_P^j=0$ else. Thus, $\prob (X_P^j=1)=1/k$. It follows that that\[
Y_j=\sum _{P:\mathcal{C}_P\in \mathcal{B}_j}X_P^j \mbox{\hspace{3mm} and \hspace{3mm}} \expect [Y_j]=\frac{1}{k}|\mathcal{B}_j|.\]
Applying the Chernoff bound \cite{MitzenmacherUpfal} and using that $|\mathcal{B}_j|> n^3/(\log n)^2$ and $k\le \log n$ we obtain\[
\prob \left( Y_j\ge \frac{6}{k}|\mathcal{B}_j|\right) \le 2^{-\frac{6}{k}|\mathcal{B}_j|} \le 2^{-6\left( \frac{n}{\log n}\right) ^3}.\]
Thus, we have that $Y_j\le \max \{ n^3/(\log n)^2,12|\mathcal{B}_j|/(\log n) \}$ with probability at least $1-2^{-6\left( \frac{n}{\log n}\right) ^3}$. Let $Y=\sum _j Y_j$. Applying the union bound yields\[
Y = \sum _{j=1}^k Y_j \le \sum _{j=1}^k \left( \frac{n^3}{(\log n)^2}+\frac{12|\mathcal{B}_j|}{\log n}\right) = k\frac{n^3}{(\log n)^2}+\frac{12\sum _{j=1}^k |\mathcal{B}_j|}{\log n} \le \frac{13n^3}{\log n}\]
with probability at least $1-2^{-5\left( \frac{n}{\log n}\right) ^3}$. Finally, observe that there are at most $k^{n^2}$ different price vectors with prices in $\{ 2^{1-k},\ldots ,2^{0}\}$. Applying the union bound once again the probability that any of these extracts full values from more than $13n^3/\log n$ consumer classes is at most $(\log n)^{n^2}2^{-5\left( \frac{n}{\log n}\right) ^3}<1$ for $n\ge 2$.
\end{proof}

\begin{lemma}
\label{l:uniformLotteryBound}
Let $\mathcal{C}$ be any pricing instance as defined above. It holds that $r_L^*(\mathcal{C})=\Omega (n^3)$.
\end{lemma}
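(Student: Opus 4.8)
The plan is to exhibit one explicit lottery system and argue that it extracts essentially the full value from every consumer. For each polynomial $P\in\mathcal{P}$ introduce a single lottery $\lambda_P=(\phi_P,v_P)$ whose probability vector $\phi_P$ is uniform over the $n$ items of $S_P$ (mass $1/n$ on each $i\in S_P$, zero elsewhere) and whose price equals the common value $v_P=2^{j_P-k}$ of the $|\mathcal{C}_P|=2^{k-j_P}$ consumers of class $\mathcal{C}_P$. Three structural facts drive the analysis: $|S_P|=n$ for every $P$; $|S_P\cap S_Q|\le 2$ for $P\ne Q$ (two polynomials of degree at most $2$ over the field $\Z/n\Z$ agree in at most two points, for $n\ge 3$); and every price $v_Q$ lies in $\{2^{1-k},2^{2-k},\dots,1\}$, so the cheapest available lottery costs at least $2^{1-k}=2\cdot 2^{-k}\ge 2/n$, with strict inequality because a prime $n\ge 3$ is not a power of two.

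Fix a consumer of class $\mathcal{C}_P$, so her valuation is $v_i=v_P$ for $i\in S_P$ and $v_i=0$ otherwise. Since every item she values carries the same value $v_P$, the expected valuation she derives from any bundle of lotteries equals $v_P\cdot\prob[\text{some sample lands in }S_P]$. I would then compare bundles in three cases. (i) If a bundle contains at least one copy of $\lambda_P$, some sample is surely in $S_P$, so the expected valuation is exactly $v_P$ while the total price is at least $v_P$; hence utility $\le 0$, with equality only for the bundle $\{\lambda_P\}$ itself. (ii) If a bundle consists of $m\ge 1$ lotteries $\lambda_{Q_1},\dots,\lambda_{Q_m}$ with every $Q_r\ne P$, then the union bound gives $\prob[\text{some sample in }S_P]\le\sum_r |S_P\cap S_{Q_r}|/n\le 2m/n$, so the expected valuation is at most $2mv_P/n\le 2m/n$ (as $v_P\le 1$), while the total price is at least $m\cdot 2^{1-k}>2m/n$; hence utility is strictly negative. (iii) The empty bundle has utility $0$. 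So the utility-maximizing bundles are precisely $\{\lambda_P\}$ and $\emptyset$, both of utility $0$, and by the buy-many rule that the consumer pays for the most expensive utility-maximizing bundle, she pays $v_P$. (That a maximizer exists is routine: a bundle of nonnegative utility has price at most $v_P$, hence uses at most $v_P/2^{1-k}$ lotteries, so only finitely many bundles are relevant.)

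Summing over classes yields total revenue $\sum_{P\in\mathcal{P}}|\mathcal{C}_P|\cdot v_P=\sum_{P\in\mathcal{P}}2^{k-j_P}\cdot 2^{j_P-k}=|\mathcal{P}|=n^3$, so $r^*_L(\mathcal{C})\ge n^3=\Omega(n^3)$. The crux --- and the reason the system must be tuned carefully --- is case (ii): because a consumer may buy many copies of a cheap lottery and, by free disposal, drive the probability of catching some item of $S_P$ toward $1$, we cannot discount $\lambda_P$ below $v_P$ at all and must lean on the tie-breaking rule; making (ii) quantitatively work is exactly where we use both the near-disjointness $|S_P\cap S_Q|\le 2$ of the polynomial graphs and the lower bound $2^{1-k}\ge 2/n$ on lottery prices, which is just large enough to outweigh the at-most-$2/n$ \emph{leakage} of any single lottery against any target class. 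The argument is cleanest for $n$ an odd prime, which is enough since the claim is asymptotic.
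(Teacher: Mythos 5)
Your proof is correct within the paper's stated model, but it takes a different route from the paper's. You price each lottery $\lambda_P$ at the full value $v_P$, show that the only utility-maximizing bundles for a $\mathcal{C}_P$-consumer are $\emptyset$ and $\{\lambda_P\}$ (both of utility $0$), and then invoke the model's convention that an indifferent consumer pays for the most expensive utility-maximizing bundle; this yields revenue exactly $n^3$. The paper instead prices $\lambda_P$ at $v_P/2$, so the consumer has \emph{strictly} positive utility $v_P/2$ from $\lambda_P$, and then argues via marginal utilities: any other lottery (added to any bundle) contributes marginal utility at most $(3/(2n))v_P$, so a utility-maximizing bundle either contains $\lambda_P$ (revenue $v_P/2$) or consists of at least $n/3$ lotteries each priced at least $v_P/n$ (revenue $\ge v_P/3$), giving $\Omega(n^3)$ with constant $1/3$. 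What each buys: your version gets the cleaner constant and pins down the exact set of maximizers, but it leans entirely on the favorable tie-breaking rule --- under adversarial tie-breaking your consumers buy nothing and revenue is $0$. One can repair this by discounting all prices by a factor $1-\varepsilon$, but note the margin you exploit, $2^{1-k}>2/n$, is only a factor $1+\Theta(1/n)$ when $n$ sits just above a power of two (so $2^{1-k}-2/n=\Theta(1/n^2)$), forcing $\varepsilon=O(1/n)$; this works but should be said explicitly. The paper's half-price construction is robust to tie-breaking for free, and its marginal-utility argument also covers mixed bundles ($\lambda_P$ plus extras) and large collections of ``leaky'' lotteries in one stroke, at the cost of a constant factor.
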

\begin{proof}
We construct a system of lotteries as follows. For every consumer class $\mathcal{C}_P$ with value $v_P$ for items in set $S_P$, we introduce a lottery $\phi _P$ with probability $1/n$ for each of the items in $S_P$ at price $v_P/2$. A consumer from class $\mathcal{C}_P$ has utility $v_P-(1/2)v_P=(1/2)v_P$ from purchasing lottery $\phi _P$. By the fact that $|S_P\cap S_Q|\le 2$ for all $Q\not= P$, any other lottery $\phi _Q$ has probability of at most $2/n$ of allocating an item in set $S_P$. We also know that, since $k\le \log n$, we have $v_Q\ge 2^{-k}v_P\ge (1/n)v_P$ and, thus, a consumer from $\mathcal{C}_P$ has utility at most $(2/n)v_P-(1/2)v_Q\le (2/n)v_P-(1/(2n))v_P=(3/(2n))v_P$. Since the same bounds the marginal utility of any lottery other than $\phi _P$ being bought in addition to some other set of lotteries, it follows that consumers in $\mathcal{C}_P$ either purchase lottery $\phi _P$ at price $v_P/2$ or some set of at least $((1/2)v_P)/((3/(2n))v_P)=(1/3)n$ lotteries at price at least $(1/n)v_P$ each, resulting in overall revenue at least $(1/3)v_P$ from each of these consumers. Summing over all classes $\mathcal{C}_P$ and the consumers in each class yields overall revenue of $(1/3)n^3$.
\end{proof}

\paragraph{Acknowledgements}
We are very grateful to Jason Hartline for numerous enlightening discussions in the early stages of this research. The result in Theorem \ref{t:dim2} has independently been discovered by David Malec.

\bibliographystyle{plain}
\bibliography{lotteryPricing}

\newpage

\begin{appendix}

\section{The Buy-One Model in Dimension $2$}
\label{a:dim2}

In the $2$-dimensional setting, we will refer to the two distinct items as the $x$-item and $y$-item. A given consumer valuation is a point $(v_x,v_y)\in (\R _0^+)^2$. For a given lottery with probabilities $(\phi _x,\phi _y)$ and $\phi _x+\phi _y=1$ (an assumption we make throughout this section) and price $p$ we call the set of all $(x,y)$ with $\phi_x x + \phi_y y = p$ its {\em indifference line}. This corresponds to the set of consumer valuations that result in zero utility from buying this lottery. Geometrically, it is a line with $x$-intercept $p/\phi_x$ and $y$-intercept $p/\phi_y$. Similarly, any arbitrary line in the plane with positive $x$- and $y$-intercepts corresponds to a lottery. The following Lemma gives an easy geometric interpretation of a lottery's price.

\begin{lemma}
\label{l:geomPrice}
Given a line with positive $x$- and $y$-intercepts, the price of the corresponding lottery is the $x$-coordinate (or $y$-coordinate) of its intersection with the diagonal $x=y$.
\end{lemma}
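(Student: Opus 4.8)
The plan is to verify the claim by a direct computation, relying on the normalization $\phi_x + \phi_y = 1$ that is in force throughout this section. Recall that the lottery with probabilities $(\phi_x,\phi_y)$ and price $p$ has indifference line $\ell = \{(x,y)\in(\R_0^+)^2 : \phi_x x + \phi_y y = p\}$, whose $x$-intercept $p/\phi_x$ and $y$-intercept $p/\phi_y$ are both positive precisely because $p>0$ and $\phi_x,\phi_y>0$ (the degenerate cases $\phi_x=0$ or $\phi_y=0$ are excluded, since then one intercept would be at infinity).

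First I would treat the direction ``lottery $\Rightarrow$ line'': substitute $x=y=c$ into the equation of $\ell$, obtaining $(\phi_x+\phi_y)c = p$, and hence $c = p$ because $\phi_x+\phi_y=1$. Thus $\ell$ meets the diagonal $\{x=y\}$ in the single point $(p,p)$, whose $x$-coordinate (equivalently, $y$-coordinate) equals the price $p$. This is the entire content of the lemma in this direction.

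Next I would check the converse, namely that every line with positive $x$- and $y$-intercepts genuinely arises as the indifference line of a unique lottery, so that the phrase ``the corresponding lottery'' is well defined. Writing such a line as $\{(x,y): x/a + y/b = 1\}$ with $a,b>0$ and matching it to $\phi_x x + \phi_y y = p$ forces $\phi_x = p/a$ and $\phi_y = p/b$; imposing $\phi_x+\phi_y=1$ then determines $p = ab/(a+b) > 0$, and consequently $\phi_x = b/(a+b)$ and $\phi_y = a/(a+b)$, which are positive and sum to $1$, so this is a legitimate lottery. Intersecting $x/a + y/b = 1$ with $x=y$ gives $x(1/a+1/b) = 1$, i.e.\ $x = ab/(a+b) = p$, confirming once more that the diagonal-intersection coordinate is the price.

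I do not anticipate any real obstacle: the only point requiring a little care is the non-degeneracy bookkeeping (positivity of the intercepts and of $\phi_x,\phi_y$), which is exactly what makes the line-to-lottery correspondence bijective; the geometric assertion itself is immediate once the normalization $\phi_x+\phi_y=1$ is used.
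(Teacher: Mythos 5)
Your computation is correct and follows essentially the same route as the paper: evaluate the indifference-line equation $\phi_x x + \phi_y y = p$ at a point with $x=y$ and use the normalization $\phi_x+\phi_y=1$ to conclude the intersection coordinate equals $p$. Your additional verification that every line with positive intercepts comes from a unique lottery is a harmless (and sensible) bookkeeping step that the paper relegates to the surrounding discussion rather than the proof itself.
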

\begin{proof}
Consider lottery $\lambda =((\phi _x,\phi _y),p)$ and the valuation $(v_x,v_y)$ at the intersection of its indifference line with the diagonal. Since $v_x=v_y$ and $\phi _x+\phi _y=1$, this consumer's value for the lottery is $v_x$.  In addition, because she is on the indifference line of $\lambda$, we have $v_x=p$.
\end{proof}

Let a consumer distribution $\cd$ and a system of lotteries $\Lambda$ be given. Consider a consumer with valuation $(v_x,v_y)$ from $\cd$ who receives utility $u$ from the lottery she picks from $\Lambda$ and let $\delta =\min \{ u,v_x,v_y\}$. We can replace this consumer's valuation with $(v_x-\delta ,v_y-\delta )$ without affecting her selection, since her expected utility from every lottery has decreased by exactly $\delta$ or all the way down to $0$. In particular, if we were to modify the lottery system, it would be guaranteed that as long as the consumer with valuations $(v_x-\delta ,v_y-\delta )$ can afford to buy any lottery, the same lottery would be bought by a consumer with valuation $(v_x,v_y)$.

We can use this observation to significantly simplify the kinds of consumer distributions we have to consider. Given $\cd$ and $\Lambda$, we define $\cd '$ as above by replacing each consumer type $(v_x,v_y)$ in $\cd$ with $(v_x-\delta ,v_y-\delta )$. As argued before, this does not change the revenue of lottery system $\Lambda$ and for every lottery system (or item pricing) it holds that its revenue on $\cd$ is an upper bound on its revenue on $\cd '$.

By construction every consumer in $\cd '$ has either utility $0$ from the lottery she purchases or has valuation $0$ for either the $x$- or $y$-item. Geometrically, all consumer types lie either on the positive part of one of the coordinate axes or on the boundary of the set of consumer types that cannot afford to purchase any lottery from $\Lambda$. Geometrically, the set of consumer types unable to afford any lottery is the intersection of the halfspaces defined by the lotteries in $\Lambda$ and, consequently, its boundary is the intersection of a convex polygon with the coordinate system's first quadrant. This situation is depicted in Fig. \ref{fig:2dim}.

We denote by $\cd '_x$, $\cd '_y$, and $\cd '_0$ the consumer types from distribution $\cd '$ that lie on the $x$-axis, $y$-axis, or on the {\em indifference polygon} defined by the indifference lines, respectively. Let $x^*$ and $y^*$ denote the $x$- and $y$-intersects of the indifference polygon. The following algorithm turns lottery system $\Lambda$ into a pure item pricing.

\begin{itemize}
\item[(1)] With probability $1/3$, split every lottery $\lambda =((\phi _x,\phi _y),p)\in \Lambda$ into two lotteries $\lambda _x=((\phi _x,0),p)$, $\lambda _y=((0,\phi _y),p)$ and let $\Lambda _x=\{ \lambda _x\, |\, \lambda \in \Lambda\}$, $\Lambda _y=\{ \lambda _y\, |\, \lambda \in \Lambda\}$. Independently run the randomized rounding procedure described in Theorem \ref{t:dim1} on both the lotteries in $\Lambda _x$ and $\Lambda _y$.
\item[(2)] With probability $2/3$ choose $\delta$, such that $(x^*/2+\delta ,y^*/2+\delta )$ lies on the indifference polygon. (Note, that $(x^*/2,y^*/2)$ lies inside the polygon by convexity and, thus, $\delta$ is well defined.) Assign prices $p_x=x^*/2+\delta$ and $p_y=y^*/2+\delta$ to the items.
\end{itemize}

We proceed by analyzing the expected revenue of the returned item pricing. For consumers in $\cd '_x$ and $\cd '_y$ it immediately follows from the analysis in Theorem \ref{t:dim1} that their expected payment given the pure item prices is exactly the same as given lottery system $\Lambda$. Since Step (1) of the algorithm is performed with probability $1/3$, the expected revenue from consumers in $\cd '_x$ and $\cd '_y$ is decreased by a factor of $3$.

Let us then fix a consumer from $\cd '_0$ with valuations $(v_x,v_y)$ located somewhere on the indifference polygon. By Lemma \ref{l:geomPrice} the price paid by this consumer given the lottery system is the $x$-coordinate of the intersection of the tangent line to the indifference polygon in $(v_x,v_y)$ with the diagonal $x=y$. We consider three cases.

Assume first that $v_x<p_x$. Then, given the item pricing, the consumer will choose to buy the $y$-item at price $p_y=y^*/2+\delta \ge y^*/2$. Recall that we assume w.l.o.g. that $y^*\ge x^*$. Since by convexity no tangent line can intersect the diagonal above $y^*$, the consumer's payment decreases by at most a factor of $2$.

Let then $v_x\ge p_x$. Assume first that $v_x\ge v_y$. By convexity, no tangent line to the polygon in this region can intersect the diagonal above $x^*$ and the same argument as in the previous case bounds our loss in revenue by a factor of $2$.

Finally, assume that $v_x\ge p_x$ and $v_x<v_y$. The tangent line to the polygon in point $(p_x,p_y)$ can be written as $y=y^*-((y^*-p_y)/p_x)x$. By convexity, $(v_x,v_y)$ lies below this line and the tangent in this point has smaller slope. Consequently, its intersection with the diagonal is upper bounded by the intersection of the tangent line in $(p_x,p_y)$ with the diagonal, which is defined by the equation\[
x=y^*-\frac{y^*-p_y}{p_x}x.\]
Observe that $p_y-p_x=y^*/2-x^*/2\le y^*/2$. We obtain that
\begin{eqnarray*}
x & = & \left( 1+\frac{y^*-p_y}{p_x}\right) ^{-1}y^*\\
 & = & \frac{y^*}{y^*-(p_y-p_x)}p_x \le 2p_x,
\end{eqnarray*}
which again yields a bound of $2$ on the loss in revenue. Since Step (2) of the algorithm is performed with probability $2/3$, we expect a total loss of at most a factor $3$.

\vspace{7mm}

\noindent {\bf Theorem \ref{t:dim2}} {\em
For $n = 2$ it holds that $r^*_L(\cd ) \leq 3r^*(\cd )$ for any consumer distribution $\cd$.
}

\section{Uniform Valuations in the Buy-One Model}
\label{a:uniVals}

Let $\cd$ be a distribution over uniform-valuation consumers. Formally, every consumer type in $\cd$ is of the form $(S,v)$, where $S$ is a subset of the items and $v$ the value for any item from $S$. The value for items from the complement of $S$ is $0$.

{\bf Upper Bound.} Let a consumer distribution $\cd$ and a lottery system $\Lambda$ be given. We partition $\cd$ into partial distributions $\cd _S$ for all possible item sets $S$, where $\cd _S$ contains all consumer types interested in item set $S$. Pick one $\cd _T$ uniformly at random and consider the set of one-dimensional lotteries $\Lambda _{T}=\{ (\sum _{i\in T}\phi _i,p)\, |\, ((\phi _1,\ldots ,\phi _n),p)\in \Lambda \}$. Apply the randomized rounding procedure described in Theorem \ref{t:dim1} to $\Lambda _T$ and assign the resulting price to all items.

As shown in Theorem \ref{t:dim1} the revenue from consumer distribution $\cd _T$ does not decrease and, since every $\cd _S$ is picked with probability $1/(2^{n}-1)$, the overall expected decrease in revenue is bounded above by a factor of $2^n$.

{\bf Lower Bound.} Let $k=2^n-2$ and $S_0,\ldots ,S_k$ be all distinct non-empty subsets of $[n]$ ordered by decreasing cardinality, i.e., $|S_0|\ge |S_1|\ge \cdots \ge |S_k|$. We define a consumer distribution as follows. For all $0\le j\le k$ we have a consumer type $c_j=(S_j,n^j)$ with probability $n^{-j}((1-n^{-1})/(1-n^{-k-1}))$. Similarly, lottery system $\Lambda$ contains a lottery $\lambda _j$ for each $0\le j\le k$, where $\lambda _j$ has probability $1/|S_j|$ for each item in $S_j$ and price $n^{j-1}$. The utility $u(c_j,\lambda _j)$ of consumer type $c_j$ from buying lottery $\lambda _j$ is $u(c_j,\lambda _j)=n^j-n^{j-1}$. For $i<j$ we know that $|S_i\backslash S_j|\ge 1$ and, thus, the probability that lottery $\lambda _i$ allocates an item from set $S_j$ is at most $1-1/n$. Thus, $c_j$'s utility from buying $\lambda _i$ is bounded above by $u(c_j,\lambda _i)\le (1-1/n)n^j=n^j-n^{j-1}$. Finally, lotteries $\lambda _i$ with $i>j$ are too expensive for this consumer type to afford. Thus, we may assume that each consumer type $c_j$ chooses to purchase lottery $\lambda _j$ when offered $\Lambda$. The total revenue obtained by $\Lambda$ then can be written as\[
\sum _{j=0}^kn^{-j}\frac{1-n^{-1}}{1-n^{-k-1}}n^{j-1}\ge \frac{1}{n}(2^n-2).\]
On the other hand, consider any pure item pricing. If the price of some item falls into the interval $(n^{j-1},n^j]$, then the total probability mass of consumer types able to afford this item is bounded above by $2n^{-j}$ and, thus, the total revenue from this item cannot exceed $n^j\cdot 2n^{-j}=2$. Summing over all items yields a bound of $2n$ on the overall revenue obtainable by any item pricing.

\vspace{7mm}

\noindent {\bf Theorem \ref{t:uniVals}} {\em
Let $\cd$ be a distribution on uniform valuation consumers. Then $r^*_L(\cd )/r^*(\cd )=\bigo (2^n)$. There exist distributions $cd$ with $r^*_L(\cd )/r^*(\cd )=\tilde{\Omega }(2^n)$.
}

\section{Vector Packing}
\label{a:vectorPacking}

{\bf Lemma \ref{t:vectorPacking}} {\em
Let $n\ge 1$ be given. For every $q\ge 2n$ there exists a set $\mathcal{V}^n_q$ of vectors in $S^{n+}_{1/\sqrt{n}}$, such that $v\cdot w\le 1/n-1/q$ for all $v,w\in \mathcal{V}^n_q$ with $v\not= w$ and $|\mathcal{V}^n_q|=\Omega (q^{(n-1)/2})$.
}

\begin{proof}
For vectors $v,w\in S^{n+}_{1/\sqrt{n}}$ we may write that\[
v\cdot w=\frac{1}{2}\left( v^2+w^2-(v-w)^2\right) = \frac{1}{n}-\frac{1}{2}||v-w||^2_2.\]
Consequently, the condition that $v\cdot w\le 1/n-1/q$ for all $v,w\in \mathcal{V}^n_q$ is equivalent to asking that a ball of radius $\sqrt{2/q}$ around the tip of any vector from $\mathcal{V}^n_q$ does not contain the tip of any of the other vectors. For the remainder of this proof we will associate vectors in $\mathcal{V}^n_q$ with $n$-dimensional balls of radius $\sqrt{2/q}$ centered at their tips.

We construct the set $\mathcal{V}^n_q$ by the following simple greedy approach. While there exists a point in $S^{n+}_{1/\sqrt{n}}$ that is not covered by previously selected balls, we choose it as the center of a new ball of radius $\sqrt{2/q}$.

We now have to lower bound the number of vectors found in this fashion. When the procedure terminates, it must be the case that $S^{n+}_{1/\sqrt{n}}$ is completely covered by the selected balls. Choose any $\varepsilon >0$. By $B^n_r$ we denote the $n$-dimensional ball of radius $r$ centered at the origin, $B^{n+}_r$ its part in the all-positive orthant. If we increase the radius of all balls chosen by our packing procedure by $\varepsilon \sqrt{2/q}$, we know that they completely cover the set\[
B^{n+}_{1/\sqrt{n}+\varepsilon \sqrt{2/q}}-B^{n+}_{1/\sqrt{n}-\varepsilon \sqrt{2/q}},\]
as all points in this set are at distance at most $\varepsilon \sqrt{2/q}$ from $S^{n+}_{1/\sqrt{n}}$. The $n$-dimensional Lebesgue measure of this set is\[
\lambda ^n(B^{n+}_{1/\sqrt{n}+\varepsilon \sqrt{2/q}}-B^{n+}_{1/\sqrt{n}-\varepsilon \sqrt{2/q}})=2^{-n}\left( (\sqrt{1/n}+\varepsilon \sqrt{2/q})^n - (\sqrt{1/n}-\varepsilon \sqrt{2/q})^n\right) \lambda ^n(B^n_1),\]
where $\lambda ^n(B^n_1)$ denotes the measure of the $n$-dimensional unit ball. Similarly, each ball of radius $(1+\varepsilon )\sqrt{2/q}$ has measure\[
\left((1+\varepsilon )\sqrt{2/q}\right) ^n \lambda ^n(B^n_1),\]
and it follows that the number of balls selected by our procedure is at least
\begin{eqnarray*}
 & & \frac{2^{-n}\left( \left( \sqrt{1/n}+\varepsilon \sqrt{2/q}\right) ^n - \left( \sqrt{1/n}-\varepsilon \sqrt{2/q}\right) ^n\right)}{\left( (1+\varepsilon )\sqrt{2q}\right) ^n}\\
 & \ge & \frac{2^{-n+1}\varepsilon \sqrt{2/q}\cdot n\left( \sqrt{1/n}-\varepsilon \sqrt{2/q}\right) ^{n-1}}{\left( (1+\varepsilon )\sqrt{2q}\right) ^n}\\
 & \ge & \frac{2^{-n+1}\varepsilon \sqrt{2/q}\cdot n\left( (1-\varepsilon )\sqrt{1/n}\right) ^{n-1}}{\left( (1+\varepsilon )\sqrt{2q}\right) ^n}\\
 & = & \frac{\varepsilon n\left( (1-\varepsilon )\sqrt{1/n}\right) ^{n-1}}{(1+\varepsilon )^n2^{(3n-3)/2}}q^{(n-1)/2} = \Omega (q^{(n-1)/2}),
\end{eqnarray*}
where the first inequality follows by lower-bounding the difference between the two terms of the enumerator by $2\varepsilon \sqrt(2/q)$ times the derivative of the convex function $x^n$ in $\sqrt{1/n}-\varepsilon \sqrt{2/q}$ and the second inequality uses the fact that $\sqrt{2/q}\le \sqrt{1/n}$.
\end{proof}

\end{appendix}

\end{document}